\DeclareMathOperator*{\argmin}{arg\,min}
\newtheorem{theorem}{Theorem}
\newtheorem{corollary}{Corollary}
\begin{document}

\title{Binary Hypothesis Testing-Based Low-Complexity Beamspace Channel Estimation for mmWave Massive MIMO Systems}

\author{Hanyoung~Park,~\IEEEmembership{Graduate~Student~Member,~IEEE,}
        and~Ji-Woong~Choi,~\IEEEmembership{Senior~Member,~IEEE}
\thanks{This work was supported by Institute of Information \& communications Technology Planning \& Evaluation (IITP) grant funded by the Korea government (MSIT) (No. RS-2024-00442085, No. RS-2024-00398157). \textit{(Corresponding author: Ji-Woong Choi.)}}
\thanks{The authors are with the Department of Electrical Engineering and Computer Science, Daegu Gyeongbuk Institute of Science and Technology, Daegu 42988, South Korea (e-mail: prkhnyng@dgist.ac.kr; jwchoi@dgist.ac.kr).}
}

\maketitle

\begin{abstract}
Millimeter-wave (mmWave) communications have gained attention as a key technology for high-capacity wireless systems, owing to the wide available bandwidth. 
However, mmWave signals suffer from their inherent characteristics such as severe path loss, poor scattering, and limited diffraction, which necessitate the use of large antenna arrays and directional beamforming, typically implemented through massive MIMO architectures. 
Accurate channel estimation is critical in such systems, but its computational complexity increases proportionally with the number of antennas. 
This may become a significant burden in mmWave systems where channels exhibit rapid fluctuations and require frequent updates. 
In this paper, we propose a low-complexity channel denoiser based on Bayesian binary hypothesis testing and beamspace sparsity. 
By modeling each sparse beamspace component as a mixture of signal and noise under a Bernoulli-complex Gaussian prior, we formulate a likelihood ratio test to detect signal-relevant elements.
Then, a hard-thresholding rule is applied to suppress noise-dominant components in the noisy channel vector.
Despite its extremely low computational complexity, the proposed method achieves channel estimation accuracy that is comparable to that of complex iterative or learning-based approaches.
This effectiveness is supported by both theoretical analysis and numerical evaluation, suggesting that the method can be a viable option for mmWave systems with strict resource constraints.
\end{abstract}
\begin{IEEEkeywords}
Beamspace channel estimation, mmWave, massive MIMO, Bernoulli-complex Gaussian prior, binary hypothesis testing, low-complexity.
\end{IEEEkeywords}

\section{Introduction} 
Millimeter-wave (mmWave) communication is widely regarded as a promising technology for future wireless networks, owing to its potential to support the growing demand for strict quality-of-service (QoS) requirements such as high data rates and dense user connectivity\cite{ref:mmwavepromising}.
This capability makes it well-suited for next-generation wireless applications that require both high capacity and low latency, such as virtual/augmented reality (VR/AR), smartglasses, mobile edge computing, autonomous vehicles, real-time industrial automation, and various Internet of Things (IoT) devices\cite{ref:ericsson, ref:mec, ref:v2x}.
Despite these benefits, mmWave communication faces several notable challenges. 
Due to their high frequency, mmWave signals suffer from severe pathloss, are highly susceptible to blockage, and exhibit limited diffraction\cite{ref:pathloss}.
Moreover, they tend to exhibit only a few dominant multipath components, resulting in sparse and highly directional channel responses in the spatial domain\cite{ref:mmwavesparse}. 
To overcome these propagation challenges and ensure reliable connectivity, mmWave systems typically adopt massive multiple-input multiple-output (MIMO) architectures to perform directional beamforming along dominant paths\cite{ref:mmwavetextbook}.

The effectiveness of such beamforming-based systems relies heavily on the availability of accurate channel state information (CSI), particularly in dynamic wireless environments characterized by user mobility, frequent blockages, and rapid channel variations\cite{ref:beamformingcsi}.
These dynamics are further exacerbated in mmWave systems due to their inherently short channel coherence time, which significantly limits the duration over which the channel remains stable\cite{ref:mmwavecoherence}. 
Consequently, channel estimation must be performed frequently and with minimal latency to remain effective.
Reliable channel estimation under these conditions is critical for a wide range of core functionalities, including user scheduling\cite{ref:app_userscheduling}, beam selection\cite{ref:app_beamselection}, precoding\cite{ref:app_precoding}, and equalization\cite{ref:app_equalization}.
As a result, the development of practical and accurate channel estimation techniques that operate efficiently under tight latency and overhead constraints remains a critical challenge in mmWave MIMO system design.

\subsection{Prior Works}\label{subsec:priorwork}
Previous works have proposed various channel estimation methods for mmWave MIMO systems.
These works can be broadly grouped into three directions.
The first line of research exploits beamspace sparsity of mmWave channels through compressive sensing (CS) frameworks \cite{ref:rwcs1, ref:rwcs2, ref:rwcs3, ref:rwcs4, ref:rwcs5, ref:rwcs6, ref:rwcs7, ref:rwcs8, ref:rwcs9, ref:rwcs10}.
The works in \cite{ref:rwcs1, ref:rwcs2, ref:rwcs3} proposed channel estimation frameworks using approximate message passing (AMP), which is a representative CS algorithm.
AMP-based channel estimation algorithms have also been developed for low-resolution receivers\cite{ref:rwcs4, ref:rwcs5}.
Meanwhile, orthogonal matching pursuit (OMP), another representative CS algorithm, has also been adopted for channel estimation\cite{ref:rwcs6, ref:rwcs7, ref:rwcs8, ref:rwcs9, ref:rwcs10}.
However, these algorithms require iterative optimization or are computationally intensive, which makes their adoption difficult in dynamic propagation environments that change rapidly.

The second research trend adopts deep learning for channel estimation\cite{ref:rwsl1,ref:rwsl2,ref:rwsl3,ref:rwsl4,ref:rwsl5,ref:rwsl6}.
The method in \cite{ref:rwsl1} proposed a supervised denoising model based on AMP and a denoising convolutional neural network (DnCNN) for beamspace channel estimation.
Ma \textit{et al.} devised a learned AMP-based sparse channel estimation framework, which reduces pilot overhead by jointly optimizing the pilot sequence and the channel estimator\cite{ref:rwsl2}.
Abdallah \textit{et al.} proposed a DnCNN-based channel estimator that learns channel support and amplitudes in the frequency domain\cite{ref:rwsl3}.
Wu \textit{et al.} applied deep unfolding for sparse beamspace channel estimation leveraging trimmed-ridge regression\cite{ref:rwsl4}.
Jang and Lee proposed the joint optimization of environment-specific beamformers and geometric parameters based on a neural network for wideband mmWave channel estimation\cite{ref:rwsl5}.
Zhang \textit{et al.} proposed a denoising approach using a generative adversarial network (GAN), which refines noisy channel estimates by leveraging spatial and temporal correlations in time-varying mmWave channels\cite{ref:rwsl6}.
However, aforementioned frameworks require paired ground-truth data, implying a strong dependence on potentially expensive datasets containing a large number of noiseless channel realizations for training.
To overcome this limitation, unsupervised learning-based channel estimation methods have been devised\cite{ref:rwul1,ref:rwul2,ref:rwul4,ref:rwul5,ref:rwul6,ref:deq,ref:ldgec}.
Tong \textit{et al}. proposed an unsupervised learning-based two-step OMP algorithm for channel estimation\cite{ref:rwul1}. 
The work in \cite{ref:rwul2} adopted subcarrier-wise estimation of signal parameters via rotational invariant techniques (ESPRIT) and $k$-means clustering to estimate the wideband mmWave channel.
Gahlot \textit{et al.} devised a training-free channel estimation that performs denoising using anisotropic diffusion model\cite{ref:rwul4}.
Balevi \textit{et al.} proposed a CS-based channel estimation with deep generative network\cite{ref:rwul5}.
Zhou \textit{et al.} utilized a diffusion model in MIMO channel estimation problem with Stein's unbiased risk estimate (SURE)\cite{ref:rwul6}.
He \textit{et al.} formulated wideband beamspace channel estimation as a CS problem on the 2-dimensional (2D) image using deep unfolding, and applied SURE to enable unsupervised learning\cite{ref:ldgec}.
In \cite{ref:deq}, the authors proposed deep equilibrium model (DEQ) for beamspace channel estimation, which eliminates the dependency on ground-truth data and the white noise assumption inherent in SURE by using generalized SURE (GSURE).
Nevertheless, these methods require training phase and have weak explainability.
Moreover, the substantial computational demands of deep learing-based methods may hinder their practicality in realistic wireless environments and hardware-constrained deployments.

Another line of research focuses on reducing computational complexity through lightweight denoisers and simplified frameworks, as the aforementioned methods can become computationally demanding when the number of antennas increases\cite{ref:rwlc1,ref:rwlc2,ref:rwlc3,ref:rwlc4,ref:LAMP-GM,ref:rwlc5,ref:rwlc6, ref:rwlc7, ref:beaches}.
The works in \cite{ref:rwlc1} and \cite{ref:rwlc2} proposed frameworks that reduce the complexity of sparse Bayesian learning-based channel estimation.
Zhou \textit{et al.} utilized matrix factorization and the log-barrier function to reduce the complexity of sparse channel estimation for one-bit receivers\cite{ref:rwlc3}.
Masood \textit{et al.} applied low-rank matrix sensing problem and low-rank matrix completion for channel estimation, exploiting channel sparsity to reduce the computational complexity\cite{ref:rwlc4}.
Ruan \textit{et al.} proposed a GAN-based beamspace channel estimation algorithm that reduces training complexity by leveraging Wasserstein GAN with simplified learned AMP (LAMP)\cite{ref:LAMP-GM}.
The work in \cite{ref:rwlc5} employed subspace-based channel estimation and simplified the process using a low-dimensional least squares problem.
Zhu \textit{et al.} applied semidefinite programming and alternating direction method of multipliers (ADMM) to simplify the original atomic norm minimization problem of channel estimation, leveraging Hankel-Toeplitz structures\cite{ref:rwlc6}.
Fesl \textit{et al.} devised a lightweight CNN within a diffusion-based generative prior to achieve low-complexity MIMO channel estimation\cite{ref:rwlc7}.
In \cite{ref:beaches}, a soft-thresholding-based beamspace channel estimation algorithm was devised, which adaptively denoises the noisy channel using SURE with a limited number of iterations.
However, these algorithms still involve iterative search, optimization, or learning phases, leaving room for further improvement in simplicity of implementation and computational efficiency.

\subsection{Contributions}
Since mmWave channels exhibit rapid fluctuations\cite{ref:mmwavetextbook}, fast and computationally efficient channel estimation becomes essential.
However, the complexity of existing estimation algorithms typically scales with the number of antennas and further increases with iterative search or learning, which poses significant challenges in terms of latency and computing resource requirements in large-scale mmWave MIMO systems.
To address this challenge, in this paper, we propose a novel beamspace channel denoiser for mmWave massive MIMO systems, which does not rely on iterative optimization and a learning-based approach.
It achieves low computational complexity by leveraging Bayesian binary hypothesis testing under Bernoulli-complex Gaussian distribution, which effectively captures the sparsity of beamspace channel model.
The detailed contributions of this work are summarized as follows:
\begin{itemize}
    \item We reformulate beamspace channel estimation as an element-wise binary hypothesis testing problem under Bernoulli-complex Gaussian distribution, enabling the identification of signal-relevant components through a closed-form likelihood ratio test. The proposed method denoises the noisy channel by zeroing out the magnitudes of elements that are classified as noise, using a hard-thresholding method. This formulation allows the proposed method to operate without iterative optimization or learning-based training.
    \item We provide theoretical expressions for the detection and false alarm probabilities of the proposed detector, and further derive closed-form bounds on the estimation error. These results demonstrate that the estimation performance improves as the beamspace sparsity increases, highlighting the suitability of the method under sparse channel conditions.
    \item The proposed algorithm is designed to operate with linear computational complexity $\mathcal{O}(M)$ for $M$-dimensional channel vector. All major components of the proposed method, including parameter estimation and thresholding, are implemented without matrix inversion or iterative procedures, ensuring suitability for real-time and resource-constrained systems.
    \item We validate the analytical results through extensive simulations, which confirm the consistency between theoretical predictions and empirical behavior. The proposed method is evaluated under various settings, including changes in the detection threshold, parameter estimation errors, and both synthetic and realistic channel models. Simulation results show that the proposed method achieves estimation accuracy comparable to other baselines, with negligible degradation compared to computationally intensive deep learning-based methods.
\end{itemize}

\subsection{Paper Outline}
The rest of this paper is organized as follows. 
In Section~\ref{sec:systemmodel}, the system model considered in this paper is described.
In Section~\ref{sec:htbbce}, the proposed channel denoiser is described in detail.
In Section~\ref{sec:simulationresults}, the numerical results and the performance analysis based on them are described.
Then, Section~\ref{sec:conclusion} concludes the paper.
\subsubsection*{Notation} 
Matrices and vectors are denoted by boldface uppercase and lowercase letters, respectively.
All vectors are column vectors. 
For matrix $\mathbf{A}$, its $i$th column vector is $\mathbf{a}_i$, and its $j$th element is $A_{ij}$.
$(\cdot)^T$ means transpose, $(\cdot)^H$ denotes conjugate transpose, and $\mathbf{I}_N$ is $N\times N$ identity matrix.
$\mathbf{0}^{N\times M}$ is $N\times M$ zero matrix.
$\|\cdot\|$ represents the $\ell_2$ norm, and $\|\mathbf{a}\|_0$ is $\ell_0$ norms, i.e., the number of nonzero entries in $\mathbf{a}$.
Also, $|\mathbf{a}|$ means the element-wise absolute-valued vector for $\mathbf{a}$.
$\mathrm{Exp}(\lambda)$ denotes exponential distribution with rate parameter $\lambda$, with the probability density function (PDF)
\begin{equation}
    f_\mathrm{Exp}(x, \lambda) = \begin{cases}
        \lambda e^{-\lambda x} & x\geq 0, \\
        0 & x<0.
    \end{cases}
\end{equation}
$\overset{D}{\longrightarrow}$ and $\overset{\textnormal{a.s.}}{\longrightarrow}$ denote convergence in distribution and convergence almost surely, respectively.
Also, $\underset{\mathcal{H}_0}{\overset{\mathcal{H}_1}{\gtrless}}$ denotes a binary decision operator in which the decision is $\mathcal{H}_1$ if the left-hand side is greater than the right-hand side, and the decision is $\mathcal{H}_0$ otherwise.

\section{System Model}\label{sec:systemmodel} 
We consider a single-cell massive MU-MIMO mmWave uplink system. 
The base station (BS) is equipped with $M$-element uniform linear array (ULA), and $K$ single-antenna users share the same frequency and time resources.
The received baseband signal $\mathbf{y}\in\mathbb{C}^M$ is given by
\begin{equation}
    \mathbf{y}=\mathbf{Hx}+\mathbf{n},
\end{equation}
where $\mathbf{H}\in\mathbb{C}^{M\times K}$ is the channel matrix, $\mathbf{x}\in\mathbb{C}^{K}$ is the transmitted signal, and $\mathbf{n}\sim \mathcal{CN}(0, N_0\mathbf{I}_M)$ is the zero-mean additive white Gaussian noise (AWGN) vector.
The frequency-flat channel vector of user $k$ in mmWave systems is given by \cite{ref:channel}
\begin{equation}
    \mathbf{h}_k=\sum_{\ell=1}^L g_{k,\ell} \mathbf{a}(\phi_{k,\ell}),
\end{equation}
where $L$ is the number of propagation paths including a potential line-of-sight (LoS) path, $g_{k,\ell}$ is the complex channel gain of the $\ell$-th dominant propagation path, and $\mathbf{a}(\phi_{k,\ell})\in\mathbb{C}^M$ is steering vector which is determined by the incident angle of the $\ell$-th path $\phi_{k,\ell}$.
The steering vector corresponding to the incident angle $\phi$ can be expressed as
\begin{equation}
    \mathbf{a}(\phi)=[1, e^{-j2\pi\phi},\cdots,e^{-j(M-1)\pi\phi}]^T.
\end{equation}
In the following, we remove the user index for notational convenience.
Without loss of generality, we assume that the variance of pilot and data symbols, which is denoted by $P_x$, is 1.
With uplink pilot, the BS can obtain a noisy channel observation via dedicated training phase for each UE,
which is presented as
\begin{equation}
    \mathbf{h'}=\mathbf{h}+\mathbf{e},
\end{equation}
where $\mathbf{e}\sim\mathcal{CN}(0, E_0\mathbf{I}_M)$ is the channel estimation error. Since we assumed that the variance of the pilot is 1, $E_0$ is equivalent to $N_0$.
To perform beamspace processing, the observed channel is transformed from the antenna domain into beamspace, which can be expressed as
\begin{equation}
    \bar{\mathbf{h}}^\prime = \mathbf{Fh}^\prime = \mathbf{Fh} + \mathbf{Fe}=\bar{\mathbf{h}}+\bar{\mathbf{e}},
\end{equation}
where $\bar{\mathbf{h}}$ and $\bar{\mathbf{e}}$ are the ground-truth channel and channel estimation error in beamspace domain, respectively, and $\mathbf{F}\in\mathbb{C}^{M\times M}$ is the normalized discrete Fourier transform matrix implying $\mathbf{F}^H=\mathbf{F}^{-1}$.
Note that $\bar{\mathbf{e}}$ is still AWGN, and $\|\mathbf{h}\|=\|\bar{\mathbf{h}}\|$.
Due to the limited scattering in mmWave propagation environments, noiseless channel vector $\bar{\mathbf{h}}$ is approximately sparse.
Using this assumption, the sparsity of the beamspace channel can be simpilified with Bernoulli-complex Gaussian distribution\cite{ref:rwcs4, ref:rwcs5, ref:LAMP-GM, ref:MAD}, which implies that the PDF of noiseless beamspace channel vector element is determined as
\begin{equation}
        f_{\bar{\mathbf{h}}}(\bar{h}_m) \triangleq~ q\frac{1}{\pi \|\bar{\mathbf{h}}\|^2/q}
        \exp \left( -\frac{|\bar{h}_m|^2}{\|\bar{\mathbf{h}}\|^2/q} \right) 
        + (1-q) \delta(\bar{h}_m),
\end{equation}
where $q=\|\bar{\mathbf{h}}\|_0/M$ is activity rate of the channel and $\delta(\cdot)$ is the Dirac-delta distribution.
To provide intuition, the activity rate $q$ indicates the fraction of active beams in the beamspace channel vector. 
For example, in a massive MIMO system with $M=128$ antennas, if only 20 elements in the noiseless beamspace channel vector $\bar{\mathbf{h}}$ carry signal energy and other elements has near-zero magnitudes, then $q=20/128=0.15625$.
This reflects the inherent sparsity of mmWave channels in beamspace caused by limited scattering, where only a small subset of angular directions contains significant multipath components.
Based on this modeling, the noisy observation of the mmWave beamspace channel can be derived as
\begin{equation}\label{eq:pdf}
    \begin{aligned}
    f_{\bar{\mathbf{h}^\prime}}(\bar{h}_m^\prime) \triangleq~
    & q \frac{1}{\pi(E_0+\|\bar{\mathbf{h}}\|^2/q)} \exp\left(-\frac{|\bar{h}_m^\prime|^2}{E_0+\|\bar{\mathbf{h}}\|^2/q}\right) \\
    & + (1-q) \frac{1}{\pi E_0} \exp\left(-\frac{|\bar{h}_m^\prime|^2}{E_0} \right).
    \end{aligned}
\end{equation}
The PDF in \eqref{eq:pdf} models the noisy observation by assigning each beam a probability $q$ of being signal-bearing and $(1-q)$ of being noise-only.

\section{Hypothesis Testing-Based Beamspace Channel Estimation}\label{sec:htbbce}
To achieve low computational complexity in beamspace channel estimation, we propose a lightweight algorithm with time complexity $\mathcal{O}(M)$ based on element-wise binary hypothesis testing under the system model described in Section~\ref{sec:systemmodel}.
Since the beamspace channel is sparse, it is reasonable to assume that many elements are inactive, i.e., their true values are near-zero, while only a few carry significant signal energy.
This motivates a statistical decision framework that classifies each component as either signal-dominant or noise-only.
Specifically, for each entry of the observed beamspace channel vector, we formulate a binary hypothesis test to determine whether the element corresponds to an active signal-bearing beam or is dominated by noise.
To this end, the considered system model can be rewritten as the following hypothesis test:
\begin{equation}
    \bar{h}^\prime_m=
    \begin{cases}
        \bar{e}_m, & \mathcal{H}_0\\
        \bar{h}_m + \bar{e}_m, & \mathcal{H}_1,
    \end{cases}
\end{equation}
where $\mathcal{H}_0$ corresponds to the case where the element contains only noise and $\mathcal{H}_1$ means the element is mixture of signal and noise.
Also, based on the considered system model, priors of the hypotheses are equivalent to
\begin{equation}
    p(\mathcal{H}_0)=1-q,
\end{equation}
\begin{equation}
    p(\mathcal{H}_1)=q.
\end{equation}
Here, $p(\mathcal{H}_0)$ and $p(\mathcal{H}_1)$ represent the prior probabilities that each beamspace component is signal-relevant and noise-dominated, respectively. These priors are derived from the Bernoulli-complex Gaussian sparsity model of the beamspace channel, i.e., the fraction of active beams. Since mmWave channels are typically sparse, $q$ is small in practice, which justifies the asymmetric prior favoring $\mathcal{H}_0$\cite{ref:mmwavetextbook}.

Based on this testing, Bayesian hypothesis testing can be organized as
\begin{equation}
    \frac{p(\bar{\mathbf{h}}^\prime|\mathcal{H}_1)}{p(\bar{\mathbf{h}}^\prime|\mathcal{H}_0)}
    \underset{\mathcal{H}_0}{\overset{\mathcal{H}_1}{\gtrless}}
    \frac{p(\mathcal{H}_0) C_{10}}{p(\mathcal{H}_1) C_{01}},
\end{equation}
where $C_{ij}$ is the cost for deciding $i$ when the true state is $j$.
For notational brevity, we simplify the cost as $C_{10}/C_{01}=C$.
The costs reflect the relative penalty of false alarm (deciding $\mathcal{H}_1$ under $\mathcal{H}_0$) versus misdetection (deciding $\mathcal{H}_0$ under $\mathcal{H}_1$) in Bayesian decision theory.
In our context, this parameter determines the aggressiveness of signal detection: a higher value of $C$ raises the detection threshold, suppressing noise at the expense of possibly missing weak signal components.
Conversely, a lower value of $C$ leads to more sensitive detection at the cost of increased false alarm probability, resulting in weaker denoising.
The selection of the cost $C$ and its impact on the decision rule and channel estimation performance are discussed in detail in Section~\ref{sec:simulationresults}.

Based on the considering Bernoulli-complex Gaussian distribution, the test can be reformulated as
\begin{equation}\label{eq:test}
    \frac{\frac{1}{\pi(\|\bar{\mathbf{h}}\|^2/p(\mathcal{H}_1)+E_0)}\exp(-\frac{|\bar{h}_m^\prime|^2}{\|\bar{\mathbf{h}}\|^2/p(\mathcal{H}_1)+E_0})}{\frac{1}{\pi E_0}\exp(-\frac{|\bar{h}_m^\prime|^2}{E_0})}
    \underset{\mathcal{H}_0}{\overset{H_1}{\gtrless}} 
    \frac{1-p(\mathcal{H}_1)}{p(\mathcal{H}_1)}C,
\end{equation}
which can be reorganized as
\begin{equation}
    |\bar{h}_m^\prime|^2 
    \underset{\mathcal{H}_0}{\overset{\mathcal{H}_1}{\gtrless}} 
    E_0\left( \frac{p(\mathcal{H}_1)}{\mathrm{SNR}}+1 \right) 
    \ln\left(\left(1 + \frac{\mathrm{SNR}}{p(\mathcal{H}_1)}\right) \frac{1 - p(\mathcal{H}_1)}{p(\mathcal{H}_1)} C \right),
\end{equation}
where $\mathrm{SNR}=\|\bar{\mathbf{h}}\|^2/E_0=\|\mathbf{h}\|^2/E_0$.
For the complete derivation, please refer to Appendix \ref{app:derivation}.
This threshold increases with the noise power and signal sparsity, intuitively suppressing components less likely to be active.

However, this test requires three types of prior knowledge that are typically unavailable prior to channel estimation, (i) SNR, (ii) noise power, and (iii) activity rate $q$.
Accordingly, we employ the blind estimators for the noise power and SNR devised in \cite{ref:MAD} which only requires $\mathcal{O}(M)$ to estimate both the noise power and SNR, and estimate the activity rate based on the estimated noise power. 
The noise power estimator proposed in \cite{ref:MAD} utilizes the median absolute deviation (MAD) estimator, and its estimate is determined as
\begin{equation}
    \widehat{E}_0 = M\times\frac{\textnormal{median}(|\bar{\mathbf{h}}^\prime|^2)}{\ln2}.
\end{equation}
The median can be found at a complexity of $\mathcal{O}(M)$ time either in both the worst and average cases using median of medians\cite{ref:medianofmedians} or on average using the quickselect algorithm\cite{ref:quickselect}.
Also, it is used in blind SNR estimation, which is given by
\begin{equation}\label{eq:snrest}
    \widehat{\mathrm{SNR}} = \max\left\{ \frac{\|\bar{\mathbf{h}}^\prime\|^2}{\widehat{E}_0} -1 ,0 \right\}.
\end{equation}

Based on the noise power estimate, we propose a low-complexity activity rate estimator. 
To estimate the activity rate $q$, we utilize the method of moments (MoM) with the fourth-order moment of the noisy observation of the channel, which can be presented as
\begin{equation}
\mu_4  \triangleq
    \mathbb{E}[|\bar{\mathbf{h}}^\prime|^4]
     = p(\mathcal{H}_0)\mathbb{E}[|\bar{\mathbf{h}}^\prime|^4|\mathcal{H}_0]+p(\mathcal{H}_1)\mathbb{E}[|\bar{\mathbf{h}}^\prime|^4|\mathcal{H}_1],
\end{equation}
with the law of total expectation.
First, in Bernoulli-complex Gaussian model, elements of $\bar{\mathbf{h}}$ corresponding to $\mathcal{H}_0$ are 0, so they only contain noise in the noisy observation $\bar{\mathbf{h}}^\prime$. Thus,
$\bar{h}_m^\prime|\mathcal{H}_0\sim\mathcal{CN}(0, E_0)$, and its fourth-order moment is given by
\begin{equation}
    \mathbb{E}[|\bar{\mathbf{h}}^\prime|^4|\mathcal{H}_0] = 2E_0^2,
\end{equation}
which is derived from the well-known result of fourth-order moment of zero-mean complex Gaussian. Likewise, the elements of $\bar{\mathbf{h}}$ corresponding to $\mathcal{H}_1$ follow Gaussian distribution, its noisy observation is also Gaussian, i.e., $\bar{h}_m^\prime|\mathcal{H}_1\sim\mathcal{CN}(0, E_0+\|\bar{\mathbf{h}}\|^2/q)$. Thus, the fourth-order moment in $\mathcal{H}_1$ is determined as
\begin{equation}
    \begin{aligned}
        \mathbb{E}[|\bar{\mathbf{h}}^\prime|^4|\mathcal{H}_1] & = 2 \left(E_0 + \frac{\|\bar{\mathbf{h}}\|^2}{q} \right)^2 \\
        & = 2\left(E_0^2 + 2E_0\frac{\|\bar{\mathbf{h}}\|^2}{q} + \left(\frac{\|\bar{\mathbf{h}}\|^2}{q}\right)^2 \right) \\
        & = 2E_0^2 + 4E_0\frac{\|\bar{\mathbf{h}}\|^2}{q} + 2\left(\frac{\|\bar{\mathbf{h}}\|^2}{q}\right)^2.
    \end{aligned}
\end{equation}
Accordingly, it can be organized as
\begin{equation}
    \begin{aligned}
        &\mathbb{E}[|\bar{\mathbf{h}}^\prime|^4]\\
        & = p(\mathcal{H}_0)\mathbb{E}[|\bar{\mathbf{h}}^\prime|^4|\mathcal{H}_0]+p(\mathcal{H}_1)\mathbb{E}[|\bar{\mathbf{h}}^\prime|^4|\mathcal{H}_1] \\
        & = (1-q)\cdot 2E_0^2 + q\cdot \left( 2E_0^2 + 4E_0\frac{\|\bar{\mathbf{h}}\|^2}{q} + 2\left(\frac{\|\bar{\mathbf{h}}\|^2}{q}\right)^2 \right) \\
        & = 2(1-q)E_0^2 + 2qE_0^2 + 4E_0\|\bar{\mathbf{h}}\|^2 + 2\frac{\|\bar{\mathbf{h}}\|^4}{q} \\
        & = 2E_0^2 + 4E_0\|\bar{\mathbf{h}}\|^2 + 2\frac{\|\bar{\mathbf{h}}\|^4}{q} \\
        & = 2E_0^2 + 4E_0^2\frac{\|\bar{\mathbf{h}}\|^2}{E_0} + 2E_0^2 \frac{\|\bar{\mathbf{h}}\|^4}{qE_0^2} \\
        & = 2E_0^2 + 4E_0^2\mathrm{SNR} + \frac{2E_0^2(\mathrm{SNR})^2}{q}.
    \end{aligned}
\end{equation}
From the expression of $\mu_4$, we can isolate $q$ as follows:
\begin{equation}\label{eq:activity_rate_est_mid}
    q = \frac{2\|\mathbf{h}\|^4}{\mu_4-2E_0^2 - 4E_0\|\mathbf{h}\|^2}
    = \dfrac{2(\mathrm{SNR})^2}{\dfrac{\mu_4}{E_0^2} - 2 - 4\mathrm{SNR}}.
\end{equation}
Also, since true value of fourth-order moment is not generally attainable, we use sample moment 
\begin{equation}
    \hat{\mu}_4 = \frac{1}{M}\sum_{m=1}^M|\bar{h}_m^\prime|^4.
\end{equation}
Based on this relation, the estimated value of activity rate $q$ can be derived in terms of $\hat{\mu}_4$, $\widehat{E}_0$, and $\widehat{\mathrm{SNR}}$.
By the definition of the activity rate, it represents the fraction of active elements among $M$ total components. 
Hence, $q$ must take a value from the set $\left\{ \frac{m}{M} | m=1,2...,M \right\}$.
With this constraint, the estimate for $q$ is determined by quantization as
\begin{equation}\label{eq:activity_rate_est}
    \hat{q}=\argmin_{q^\prime\in \{\frac{m}{M} | m=1,...,M\} }
    \left| \frac{2(\widehat{\mathrm{SNR}})^2}{\dfrac{\hat{\mu}_4}{E_0^2}-2-4\widehat{\mathrm{SNR}}} - q^\prime \right|.
\end{equation}

\noindent Note that this estimator also requires only $\mathcal{O}(M)$.
The performance of the estimators and the influence of the estimation error are discussed in Section~\ref{sec:simulationresults}.
Accordingly, the testing can be rewritten with the estimated values as follows:
\begin{equation}
    |\bar{h}_m|^2 
    \underset{\mathcal{H}_0}{\overset{\mathcal{H}_1}{\gtrless}} 
    \widehat{E}_0\left( \frac{\hat{q}}{\widehat{\mathrm{SNR}}}+1 \right)
    \ln\left(\left(1 + \frac{\widehat{\mathrm{SNR}}}{\hat{q}}\right) \frac{1 - \hat{q}}{\hat{q}} C \right).
\end{equation}

After classifying the elements as either noise-only or signal-dominant, we denoise the channel vector by setting the noise-only components to zero, as given by
\begin{equation}
    \bar{h}_m^\star =
    \begin{cases}
        0, & \textnormal{if}~ \mathcal{H}_0\\
        \bar{h}^\prime_m, & \textnormal{if}~ \mathcal{H}_1. 
    \end{cases}
\end{equation}
Note that this procedure can be interpreted as a hard-thresholding operation applied to $\bar{\mathbf{h}}^\prime$, with the threshold given by
$\widehat{E_0}\left( {\hat{q}}/{\widehat{\mathrm{SNR}}}+1 \right) \ln\left((1 + \widehat{\mathrm{SNR}}/{\hat{q}} ) \frac{1 - \hat{q}}{\hat{q}} C \right).$

Meanwhile, the performance of the proposed algorithm can be characterized analytically, as it is grounded in a statistical framework.
In particular, we evaluate the detection probability ($P_D$) and false alarm probability ($P_{FA}$) within the binary hypothesis testing paradigm.
The detection probability $P_D$ denotes the probability of correctly declaring a signal-relevant element, whereas the false alarm probability $P_{FA}$ refers to the probability of incorrectly declaring an element as signal-relevant when the true state is noise-relevant, i.e., a type-I error.
By the Gaussian properties of the channel and noise model, they have following closed-form expressions:
\begin{equation}\label{eq:P_D}
    P_D=\exp\left( -\frac{\tau}{E_0+\|\bar{\mathbf{h}}\|^2/q}  \right),
\end{equation}
\begin{equation}
    P_{FA} = \exp\left( -\frac{\tau}{E_0} \right),
\end{equation}
where $\tau$ is the detection threshold.
The detailed derivation is conducted in Appendix~\ref{app:pfapd}.
The expression shows that an increase in noise power results in a higher false alarm probability, whereas greater sparsity leads to a higher detection probability.
These probabilities directly affect the accuracy of the channel estimation.
In particular, they influence the mean-squared error (MSE) of the estimated channel, which can be approximated in closed-form.
Theorem~\ref{thm:bound} provides an analytical approximation of the MSE in terms of the detection and false alarm probabilities.
\begin{theorem}[Expected MSE]\label{thm:bound}
    Given the proposed channel estimation algorithm and the aforementioned system model, the MSE of the estimated channel is approximately expressed as:
    \begin{equation}
        \mathrm{MSE} \approx q(P_DE_0+(1-P_D)\|\bar{\mathbf{h}}\|^2)+(1-q)P_{FA}E_0,
    \end{equation}
    where MSE is defined as
    \begin{equation}
        \mathrm{MSE}=\frac{1}{M} \sum_{m=1}^M \mathbb{E}[|\bar{h}_m^\star-\bar{h}_m|^2].
    \end{equation}
\end{theorem}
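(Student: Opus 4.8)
The plan is to expand the per-element expected error $\mathbb{E}\!\left[|\bar{h}_m^\star-\bar{h}_m|^2\right]$ through the law of total expectation, conditioning jointly on the true state of the $m$-th beam and on the decision returned by the hard-thresholding rule. This splits the sample space into four mutually exclusive events: correct detection (true $\mathcal{H}_1$, decide $\mathcal{H}_1$), miss (true $\mathcal{H}_1$, decide $\mathcal{H}_0$), false alarm (true $\mathcal{H}_0$, decide $\mathcal{H}_1$), and correct rejection (true $\mathcal{H}_0$, decide $\mathcal{H}_0$), which occur with probabilities $qP_D$, $q(1-P_D)$, $(1-q)P_{FA}$, and $(1-q)(1-P_{FA})$, respectively, since the priors are $p(\mathcal{H}_1)=q$ and $p(\mathcal{H}_0)=1-q$ while $P_D$, $P_{FA}$ are the conditional detection and false-alarm probabilities from \eqref{eq:P_D}.

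First I would read off the error incurred in each branch directly from the estimator $\bar{h}_m^\star$. Under correct detection the estimate passes through the observation, so $\bar{h}_m^\star-\bar{h}_m=\bar{e}_m$ and the error energy is $\mathbb{E}[|\bar{e}_m|^2]=E_0$. Under a miss the estimate is set to zero, giving $\bar{h}_m^\star-\bar{h}_m=-\bar{h}_m$ with error energy equal to the second moment of an active beam, $\mathbb{E}[|\bar{h}_m|^2\mid\mathcal{H}_1]$, which the statement identifies with the signal power $\|\bar{\mathbf{h}}\|^2$. Under a false alarm the estimate again equals the pure-noise observation, so the error energy is $E_0$, while correct rejection contributes zero. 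Weighting the four constants $E_0$, $\|\bar{\mathbf{h}}\|^2$, $E_0$, $0$ by the event probabilities above and inserting them into $\mathrm{MSE}=\frac{1}{M}\sum_{m=1}^M\mathbb{E}[|\bar{h}_m^\star-\bar{h}_m|^2]$ collapses the average to $q\!\left(P_DE_0+(1-P_D)\|\bar{\mathbf{h}}\|^2\right)+(1-q)P_{FA}E_0$ after routine algebra.

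The single place where equality degrades to approximation --- and thus the main obstacle --- is that the thresholding decision $\{|\bar{h}_m^\prime|^2\gtrless\tau\}$ is itself a function of the very random variables whose energy I am averaging, so the indicator and the magnitude are statistically coupled. For example, the exact false-alarm contribution is $\mathbb{E}\!\left[|\bar{e}_m|^2\,\mathds{1}_{\{|\bar{e}_m|^2>\tau\}}\right]$, a truncated second moment that strictly exceeds $E_0P_{FA}$ because conditioning on crossing the threshold biases toward larger magnitudes; an analogous coupling links $\bar{h}_m$ to the decision through $\bar{h}_m^\prime=\bar{h}_m+\bar{e}_m$ in the detection and miss branches. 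The approximation I would invoke is to decorrelate magnitude from decision, replacing each $\mathbb{E}[|\cdot|^2\,\mathds{1}_{\{\mathcal{A}\}}]$ by $\mathbb{E}[|\cdot|^2]\,P(\mathcal{A})$, which is precisely the step that turns the truncated moments into the plain second moments used above and is what the ``$\approx$'' in the statement absorbs. A fully rigorous version would either retain the truncated-Gaussian moments or bound their deviation from this product form, but in the sparse, moderate-threshold regime of interest the factorization is tight and suffices to recover the stated expression.
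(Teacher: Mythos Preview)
Your proof is correct and follows essentially the same route as the paper's: both condition on the true hypothesis and then on the detector's decision, read off the per-branch error energies $E_0$, $\|\bar{\mathbf{h}}\|^2$, $E_0$, and $0$, and weight them by $qP_D$, $q(1-P_D)$, $(1-q)P_{FA}$, $(1-q)(1-P_{FA})$. Your explicit identification of the decorrelation step---replacing $\mathbb{E}[|\cdot|^2\,\mathds{1}_{\{\mathcal{A}\}}]$ by $\mathbb{E}[|\cdot|^2]\,P(\mathcal{A})$---is in fact more careful than the paper, which simply writes ``$\approx$'' at the corresponding points without naming the source of the approximation.
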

\begin{proof}
    Please refer to Appendix~\ref{app:bound}.
\end{proof}
\noindent 
In the asymptotic regime where the element-wise signal detection becomes nearly perfect, i.e., $P_D\to1$ and $P_{FA}\to 0$, the approximated term reduces to a simple form that isolates the contribution of noise on the active components. This limiting behavior is summarized in Corollary~\ref{cor:inf}.
\begin{corollary}\label{cor:inf}
    Consider the approximation of the expected MSE in Theorem 1. In the limiting case where the detection probability $P_D\to1$ and the false alarm probability $P_{FA}\to0$, it converges to:
    \begin{equation}
        \mathrm{MSE}\to qE_0.
    \end{equation}
\end{corollary}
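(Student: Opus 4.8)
The plan is to obtain the result by a direct limit computation on the approximation established in Theorem~\ref{thm:bound}, exploiting the fact that the approximated MSE is a polynomial---and therefore continuous---function of the two detection metrics $P_D$ and $P_{FA}$, so that no indeterminate forms can arise and substitution of the limiting values is immediately justified.

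First I would recall the expression from Theorem~\ref{thm:bound},
\begin{equation}
    \mathrm{MSE} \approx q\bigl(P_DE_0+(1-P_D)\|\bar{\mathbf{h}}\|^2\bigr)+(1-q)P_{FA}E_0,
\end{equation}
and group the right-hand side by the quantity it depends on: a term that is affine in $P_D$ weighted by the prior $q$, and a term that is linear in $P_{FA}$ weighted by $(1-q)$. Because the two probabilities enter separately, I would handle the two limits independently and then recombine by linearity. Taking $P_{FA}\to0$ collapses the false-alarm contribution, $(1-q)P_{FA}E_0\to0$. Taking $P_D\to1$ in the first term makes the misdetection penalty $q(1-P_D)\|\bar{\mathbf{h}}\|^2$ vanish, since $(1-P_D)\to0$ while $\|\bar{\mathbf{h}}\|^2$ is a fixed finite constant, whereas the residual-noise contribution on the correctly detected beams converges as $qP_DE_0\to qE_0$. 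Summing the single surviving term gives $\mathrm{MSE}\to qE_0$.

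I do not anticipate any genuine obstacle: each summand is continuous in $(P_D,P_{FA})$ and every limit is finite, so no $\varepsilon$--$\delta$ reasoning or interchange-of-limit justification is required. The only point I would state explicitly is the interpretation of the limiting form---namely that in the idealized regime where detection becomes errorless, both the misdetection loss and the false-alarm leakage disappear, so that the irreducible per-element noise floor $E_0$ is retained on exactly the fraction $q$ of beams that are active, yielding the clean aggregate expression $qE_0$.
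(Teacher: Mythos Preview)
Your proposal is correct and follows essentially the same approach as the paper's own proof: both recall the approximation from Theorem~\ref{thm:bound} and then directly substitute the limits $P_D\to 1$ and $P_{FA}\to 0$ term by term, observing that the misdetection term $q(1-P_D)\|\bar{\mathbf{h}}\|^2$ and the false-alarm term $(1-q)P_{FA}E_0$ vanish while $qP_DE_0\to qE_0$. Your additional remark on continuity is a slight elaboration, but the argument is the same.
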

\begin{proof}
    Please refer to Appendix~\ref{app:proof_cor}.
\end{proof}
\noindent This expression reflects the residual distortion due solely to additive noise on the correctly identified signal support.
Notably, this approximation also illustrates that the estimation error decreases as the beamspace channel has more sparsity. This scaling behavior is formalized below.
\begin{corollary}\label{cor:prop}
    Under fixed SNR, if $P_D\gg P_{FA}$ and $P_D\gg 1-P_D$, the MSE of the proposed channel estimation algorithm satisfies the following scaling behavior:
    \begin{equation}
        \mathrm{MSE}\approx\beta q \propto q,
    \end{equation}
    where $\beta$ is a positive constant.
\end{corollary}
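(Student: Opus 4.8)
The plan is to begin from the MSE approximation established in Theorem~\ref{thm:bound}, namely
\begin{equation}
\mathrm{MSE}\approx qP_DE_0+q(1-P_D)\|\bar{\mathbf{h}}\|^2+(1-q)P_{FA}E_0,
\end{equation}
and to reorganize it into an affine function of $q$ so that the two dominance hypotheses can be used to discard every contribution that is not proportional to $q$. First I would substitute $\|\bar{\mathbf{h}}\|^2=\mathrm{SNR}\cdot E_0$, which holds because the SNR is held fixed, and expand the false-alarm term via $(1-q)P_{FA}E_0=P_{FA}E_0-qP_{FA}E_0$. This yields the form $\mathrm{MSE}\approx\beta' q+P_{FA}E_0$ with slope $\beta'=P_DE_0+(1-P_D)\,\mathrm{SNR}\,E_0-P_{FA}E_0$, cleanly separating the part that scales linearly in $q$ from a residual constant.

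Next I would prune the slope coefficient using $P_D\gg 1-P_D$, i.e.\ $P_D\to 1$. Among the $q$-scaled terms this lets me drop $q(1-P_D)\,\mathrm{SNR}\,E_0$ against $qP_DE_0$, since after cancelling the common factor $q$ the comparison reduces to $(1-P_D)\,\mathrm{SNR}$ versus $P_D$, and the former is negligible for fixed SNR once $P_D$ is close to one. In the same step the hypothesis $P_D\gg P_{FA}$ removes the $qP_{FA}E_0$ contribution relative to $qP_DE_0$, so the slope collapses to $\beta'\approx P_DE_0$, a fixed positive quantity in the fixed-SNR, near-constant-$P_D$ regime. Setting $\beta=P_DE_0$ then gives $\mathrm{MSE}\approx\beta q$ up to the leftover constant.

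The main obstacle, and the step that genuinely requires the dominance hypotheses rather than routine bookkeeping, is discarding the residual constant $P_{FA}E_0$: unlike the retained term it is $O(1)$ in $q$ whereas $\beta q$ is $O(q)$, so merely invoking $P_D\gg P_{FA}$ is not by itself sufficient. I would instead argue that in the stated operating regime the false-alarm probability $P_{FA}=\exp(-\tau/E_0)$ is driven exponentially small by the detection threshold $\tau$, so that $P_{FA}E_0\ll qP_DE_0$ and the constant is dominated by the linear term. With that constant absorbed, I conclude $\mathrm{MSE}\approx\beta q\propto q$ with $\beta=P_DE_0>0$, establishing the claimed linear scaling in the sparsity parameter $q$.
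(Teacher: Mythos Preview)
Your proposal is correct and follows essentially the same route as the paper: start from the Theorem~\ref{thm:bound} approximation, invoke $P_D\gg P_{FA}$ to discard the false-alarm contribution and $P_D\gg 1-P_D$ to discard the $(1-P_D)\|\bar{\mathbf{h}}\|^2$ term, and conclude $\mathrm{MSE}\approx qP_DE_0=\beta q$. Your additional care in isolating the residual constant $P_{FA}E_0$ and arguing separately that it is dominated by $\beta q$ is a refinement the paper's own proof simply glosses over, dropping $(1-q)P_{FA}E_0$ directly under the $P_D\gg P_{FA}$ hypothesis without addressing the small-$q$ issue you flag.
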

\begin{proof}
    Please refer to Appendix~\ref{app:proof_cor2}.
\end{proof}

\noindent
The performance of the proposed activity rate estimator also can be theoretically analyzed because it is based on the statistical function. 
Theorem~\ref{thm:bound_act} shows the bounds on the variance of MoM-based estimator.
\begin{theorem}\label{thm:bound_act}
    Given the proposed activity rate estimator in~\eqref{eq:activity_rate_est}, sufficiently large $M$, and assuming $\widehat{E}_0\approx E_0$ and $\widehat{\mathrm{SNR}}\approx\mathrm{SNR}$, there exist constants $c_1,c_2>0$ which satisfies
    \begin{equation}
        \frac{c_1 q^2}{M}\leq\mathrm{Var}(\hat{q})\leq \frac{c_2 q^2}{M},
    \end{equation}
    and thus $\mathrm{Var}(\hat{q}) \propto q^2.$
\end{theorem}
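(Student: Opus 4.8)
The plan is to treat $\hat{q}$ as a smooth transformation of the single random quantity $\hat{\mu}_4$ and to propagate the variance of $\hat{\mu}_4$ through that map. Under the stated assumptions $\widehat{E}_0\approx E_0$ and $\widehat{\mathrm{SNR}}\approx\mathrm{SNR}$, the only stochastic input to \eqref{eq:activity_rate_est} is the sample moment $\hat{\mu}_4=\frac{1}{M}\sum_{m=1}^{M}|\bar{h}_m^\prime|^4$. Since the quantization grid $\{m/M\}$ has spacing $1/M$, for sufficiently large $M$ I would first replace the discrete $\argmin$ in \eqref{eq:activity_rate_est} by the continuous map $\tilde{q}=g(\hat{\mu}_4)$ with $g(x)=2(\mathrm{SNR})^2/(x/E_0^2-2-4\,\mathrm{SNR})$, arguing that rounding to the grid perturbs the variance only by a higher-order term negligible relative to $q^2/M$.

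Next I would linearize $g$ about the true moment $\mu_4$ via the delta method, so that $\mathrm{Var}(\hat{q})\approx(g'(\mu_4))^2\,\mathrm{Var}(\hat{\mu}_4)$. Differentiating \eqref{eq:activity_rate_est_mid} and using that at the true value $\mu_4/E_0^2-2-4\,\mathrm{SNR}=2(\mathrm{SNR})^2/q$, one obtains $g'(\mu_4)=-q^2/(2E_0^2(\mathrm{SNR})^2)$, a Jacobian factor that already carries a clean $q^4$. For the remaining factor I would compute $\mathrm{Var}(\hat{\mu}_4)=\frac{1}{M}\big(\mathbb{E}[|\bar{h}_m^\prime|^8]-\mu_4^2\big)$ from the Bernoulli-complex Gaussian model using the moment identity $\mathbb{E}[|z|^{2n}]=n!\,\sigma^{2n}$ for a zero-mean complex Gaussian of variance $\sigma^2$; this yields $\mathbb{E}[|\bar{h}_m^\prime|^8]=(1-q)\,24E_0^4+q\,24(E_0+\|\bar{\mathbf{h}}\|^2/q)^4$, while $\mu_4$ is already available from the derivation preceding \eqref{eq:activity_rate_est_mid}.

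The decisive step is to combine these pieces and extract matching two-sided bounds. The cleanest route to the claimed $q^2$ law is to factor $\mathrm{Var}(\hat{q})\approx(\mu_4 g'(\mu_4))^2\cdot\mathrm{Var}(\hat{\mu}_4)/\mu_4^2$, where $\mu_4 g'(\mu_4)\asymp q$ supplies the order-$q^2$ factor and $\mathrm{Var}(\hat{\mu}_4)/\mu_4^2=\frac{1}{M}\big(\mathbb{E}[|\bar{h}_m^\prime|^8]/\mu_4^2-1\big)$ is the scaled squared coefficient of variation of the sample moment. I expect this last quantity to be the main obstacle: both $\mathbb{E}[|\bar{h}_m^\prime|^8]$ and $\mu_4^2$ are dominated by the active-element contribution, whose per-element variance $\sigma_1^2=E_0+\|\bar{\mathbf{h}}\|^2/q$ grows as $q$ shrinks, so the heavy tail of the active component tends to inflate the ratio $\mathbb{E}[|\bar{h}_m^\prime|^8]/\mu_4^2$ as $q$ decreases. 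Pinning down the constants $c_1,c_2$ therefore requires tracking the leading terms of the 8th moment against those of $\mu_4^2$, controlling their cancellation, and bounding the ratio $\mathbb{E}[|\bar{h}_m^\prime|^8]/\mu_4^2$ from both sides uniformly over the regime of interest; once a two-sided estimate $\mathrm{Var}(\hat{\mu}_4)/\mu_4^2\asymp 1/M$ is secured, together with rigorous control of the delta-method remainder at finite $M$, the scaling $\mathrm{Var}(\hat{q})\asymp q^2/M$ and hence $\mathrm{Var}(\hat{q})\propto q^2$ follow at once.
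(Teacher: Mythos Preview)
Your delta-method core matches the paper: both replace the quantized estimator by the smooth map and both obtain the derivative $g'(\mu_4)=-q^2/(2E_0^2\,\mathrm{SNR}^2)=-q^2/(2\|\bar{\mathbf{h}}\|^4)$, and your treatment of the rounding as a higher-order perturbation parallels the paper's explicit decomposition $\hat{q}=\hat{q}_u+\tilde{q}$ with $|\tilde{q}|\le 1/(2M)$ handled via Cauchy--Schwarz on the cross term. The divergence is in how you pass from the delta-method output $(g'(\mu_4))^2\,\mathrm{Var}(\hat{\mu}_4)=q^4\sigma^2/(4M\|\bar{\mathbf{h}}\|^8)$ to the claimed $q^2/M$ scaling.

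Your proposed route---factoring out $(\mu_4 g'(\mu_4))^2\asymp q^2$ and then securing $\mathrm{Var}(\hat{\mu}_4)/\mu_4^2\asymp 1/M$ uniformly in $q$---does not go through. From your own moment identities, for small $q$ one has $\mathbb{E}[|\bar{h}_m'|^8]\approx q\cdot 24(\|\bar{\mathbf{h}}\|^2/q)^4=24\|\bar{\mathbf{h}}\|^8/q^3$ while $\mu_4^2\approx 4\|\bar{\mathbf{h}}\|^8/q^2$, so $\mathbb{E}[|\bar{h}_m'|^8]/\mu_4^2\approx 6/q$ is unbounded as $q\to 0$; the coefficient of variation you flagged as the ``main obstacle'' genuinely cannot be bounded uniformly, and your factorization would then yield $q^2\cdot 1/(qM)=q/M$ rather than $q^2/M$. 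The paper sidesteps the eighth moment entirely: it treats $c=\sigma^2/(4\|\bar{\mathbf{h}}\|^8)$ as a fixed constant, writes $\mathrm{Var}(\hat{q}_u)=cq^4/M$, and then invokes the discrete range $q\in\{1/M,\ldots,1\}$ to squeeze $q^4$ via $q^4\le q^2$ (since $q\le 1$) and $q^4=q^2\cdot q^2\ge q^2/M^2$ (since $q\ge 1/M$). That purely algebraic sandwich on $q^4$, not any higher-moment analysis, is the missing step that produces the stated bounds $c_1 q^2/M\le\mathrm{Var}(\hat{q})\le c_2 q^2/M$.
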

\begin{proof}
    Please refer to Appendix~\ref{app:proof_thm2}.
\end{proof}
\noindent This implies that the estimator exhibits increased stability under higher sparsity levels, i.e., smaller $q$.

In summary, the proposed method performs signal denoising with five main steps, which is shown in Algorithm~\ref{alg:alg1}: 1) blind noise power estimation, 2) blind SNR estimation, 3) blind activity rate estimation, 4) threshold calculation, and 5) hard thresholding.
In terms of computational complexity, blind estimators for noise power, SNR, and blind activity rate estimation require $\mathcal{O}(M)$ for each. 
Also, since threshold calculation is derived as a closed-form, it only requires $\mathcal{O}(1)$.
The final step, hard thresholding, requires $\mathcal{O}(M)$ to scan all elements of beamspace channel vector and perform detection. 
Overall, the estimator only requires $\mathcal{O}(M)$, which represents extremely reduced complexity, which is discussed in detail compared to other channel estimation algorithms in Section~\ref{sec:simulationresults}.

\begin{algorithm}[t]
\caption{Hypothesis Testing-Based Beamspace Channel Estimation}\label{alg:alg1}
\SetAlgoLined
\textbf{input:} Noisy beamspace channel vector $\bar{\mathbf{h}}^\prime$, testing parameter $C$\\
\textbf{initialization:} estimated beamspace channel vector $\bar{\mathbf{h}}^\star=\mathbf{0}^{M\times 1}$\\
Calculate the noise power estimate $\widehat{E}_0$ \\
Compute $\widehat{\mathrm{SNR}}$ and $\hat{q}$ using $\widehat{E}_0$\\
Obtain the threshold with $\widehat{E}_0,\widehat{\mathrm{SNR}},\hat{q}$ as
\begin{equation}
    \tau = \widehat{E}_0\left( \frac{\hat{q}}{\widehat{\mathrm{SNR}}}+1 \right) \ln \left(\left( 1+\frac{\widehat{\mathrm{SNR}}}{\hat{q}} \right) \frac{1-\hat{q}}{\hat{q}}C \right)
\end{equation}\\
\For{$m=1:M$}{
    \If{$|\bar{h}^\prime_m|^2\geq \tau$}{
    $\bar{h}_m^\star = \bar{h}_m^\prime$.
    }
}
\Return{\textnormal{beamspace channel estimate} $\bar{\mathbf{h}}^\star$}    
\end{algorithm}

\section{Simulation Results}\label{sec:simulationresults}
In this section, we provide numerical results of the simulation to demonstrate the competence of the proposed algorithm.
First, we describe the simulation setup. Then, we present and analyze the numerical results that illustrate the performance of the proposed method.
\subsection{Simulation Configuration}
To assess the performance of the proposed algorithm, we consider a BS with $128$-element ULA employing all-digital beamforming, and 8 users equipped with single antennas in the cell.
We assume the antenna spacing of BS is half-wavelength, and the antenna of each user is omnidirectional.
For the channel models, we consider both synthetic data following the distribution in \eqref{eq:pdf} and realistic channel models.
The realistic channel models are generated for both LoS and non-LoS (NLoS) scenarios based on the QuaDRiGa mmMAGIC UMi\cite{ref:quadriga} with a carrier frequency of 50 GHz.
The generated channel models contain many near-zero elements but rarely include exact zeros, because the signals in realistic channel models arrive from arbitrary directions and do not align well with DFT bins, resulting in power leakage\cite{ref:leakage}.
Therefore, directly defining the activity rate using the $\ell_0$-norm is infeasible. 
To quantify the beamspace sparsity in a realistic setting where channel coefficients are rarely exactly zero, we define the activity rate based on an energy thresholding criterion.
Let $|\bar{\mathbf{h}}^\textnormal{sort}|^2=[|\bar{h}^\textnormal{sort}_1|^2 ,\cdots, |\bar{h}^\textnormal{sort}_M|^2]^T\in\mathbb{R}^M$ denote the beamspace element-wise channel power vector which is sorted in descending order, i.e., $|\bar{h}^\textnormal{sort}_1|^2\geq\cdots\geq |\bar{h}^\textnormal{sort}_M|^2$.
Then, for a given energy threshold $\eta$, the activity rate is defined as $q_\textnormal{energy}=A_\eta/M$, where the number of active beam $A_\eta$ is
\begin{equation}
    A_\eta = \argmin_a \left\{ \sum_{m=1}^a |\bar{h}_m^\textnormal{sort}|^2 \geq \eta\cdot\|\bar{\mathbf{h}}\|^2 \right\}.
\end{equation} 
This means that the beam components that, even when combined among themselves, contribute less than $(1-\eta)$ of the total channel power are classified as inactive.
For example, if $\eta=0.9$, the beams that constitute 90\% of the total power in the noiseless beamspace channel are considered active, while the remaining smallest beams that represent the residual 10\% are regarded as inactive.
In this paper, we consider $\eta=0.99$, to neglect the 1\% power leakage.\footnote{The energy threshold $\eta$ can also be set to other values, e.g., 0.95, 0.999, or other appropriate values to regard near-zero elements as inactive.}
This reflects the effective sparsity of the channel by capturing how concentrated the energy is among a few dominant beams.
This classification metric is also used to compute $P_D$ and $P_{FA}$.
The cost parameter $C$ used in the hypothesis testing is set to 5, which shows a robust estimation performance. A detailed empirical justification for this choice is presented in a subsequent subsection.

For the simulations, we conducted 10000 Monte Carlo trials.
To conduct performance evaluation of the proposed channel estimation algorithm, we compare its estimation accuracy baseline channel estimation algorithms.
For the baseline, we consider least-squares (LS) estimation, which is known as a representative and basic pilot-based channel estimation method, and mmWave beamspace channel estimation algorithms that exploit sparsity: BEACHES\cite{ref:beaches} as a low-complexity denoiser, LAMP-GM\cite{ref:LAMP-GM} with the optimal number of Gaussian mixtures as a supervised learning-based method, and LDGEC-SURE\cite{ref:ldgec} as an unsupervised learning-based estimator.
Also, as a reference, the results for ``Perfect detection'' use the same noise-zeroing-based approach for denoising, but it is based on perfect detection of nonzero beamspace channel vector element. Note that this approach is equivalent to the theoretically ideal case of the proposed algorithm which ignores the estimation errors of activity rate, noise power, and SNR. 

To assess the performance of the proposed method in the application based on channel estimation, we evaluate post-equalization uncoded bit error rate (BER) for 16-quadrature amplitude modulation (QAM).
For this analysis, we apply linear minimum mean-squared error (LMMSE) equalizer. The equalization matrix is presented as
\begin{equation}
    \mathbf{W}=\hat{\mathbf{H}}(\hat{\mathbf{H}}^H\hat{\mathbf{H}}+\frac{1}{\mathrm{SNR}}\mathbf{I}_K)^{-1}.
\end{equation}
Here, we add ``Perfect CSI'', which corresponds to the perfect channel estimation, for baseline to compare the performance of the proposed method and the other baselines with the optimal case.
To investigate the influence of the noise power estimation error, we define relative noise power error as
\begin{equation}
    \widetilde{E}_0=\frac{\widehat{E}_0-E_0}{E_0}.
\end{equation}
\subsection{Complexity Analysis}

\begin{table}[]
    \centering
    \caption{Computational Complexity of Denoisers}
    \begin{tabular}{|c|c|c|}
    \hline
    \textbf{Algorithm} & \textbf{Complexity} & \textbf{Learning-based} \\ \hline \hline
    BEACHES       &   $\mathcal{O}(M\log M)$  & No \\ \hline
    LAMP-GM       &   $\mathcal{O}(TM^2)$  & Yes  \\ \hline
    LDGEC-SURE    &   $\mathcal{O}(TM^3)$  & Yes    \\ \hline
    Proposed      &   $\mathcal{O}(M)$    & No    \\ \hline
    \end{tabular}
    \label{tab:complexity}
\end{table}
\begin{figure*}[ht]
    \centering
    \subfloat[]{\includegraphics[width=0.33\linewidth]{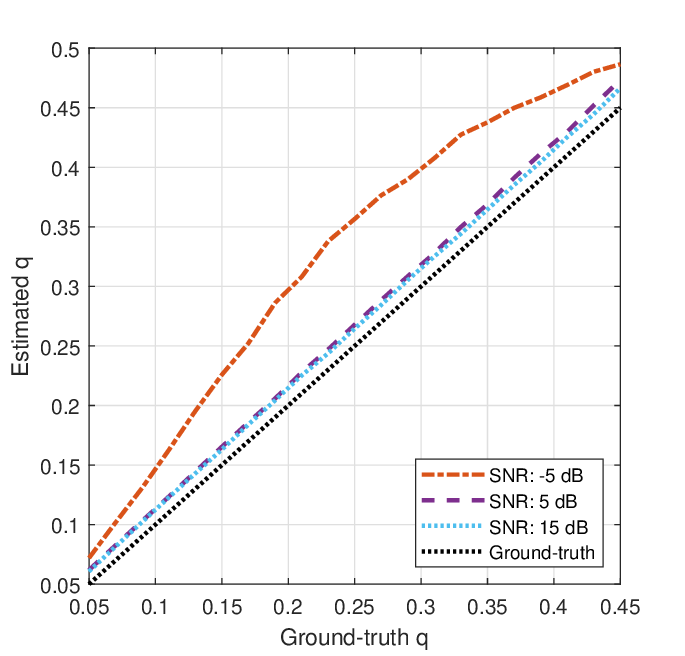}}
    \subfloat[]{\includegraphics[width=0.33\linewidth]{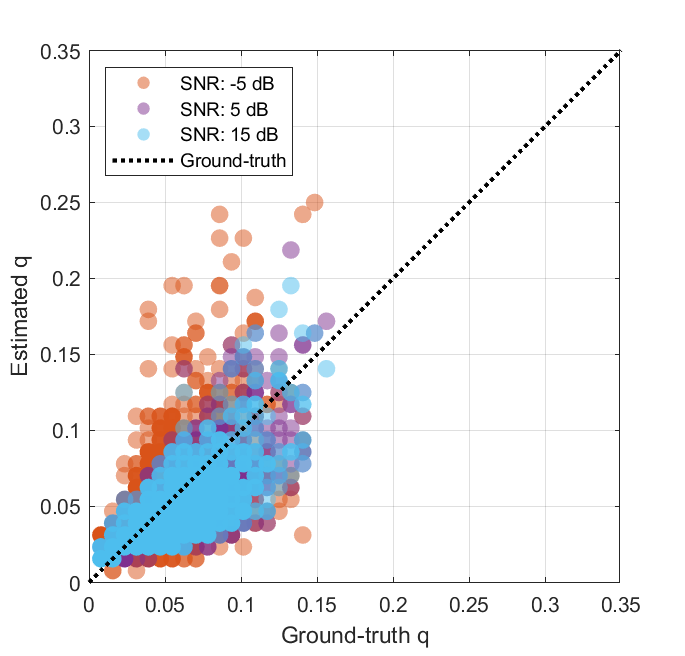}}
    \subfloat[]{\includegraphics[width=0.33\linewidth]{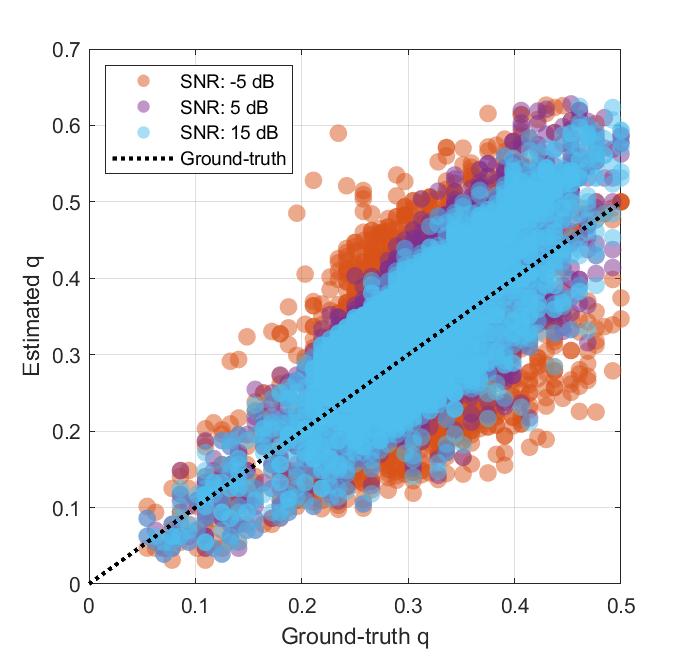}}
    \caption{Estimated activity rate depending on SNR. (a) synthetic channel (b) LoS (c) NLoS.}
    \label{fig:activity}
\end{figure*}

We now compare the computational complexity of the proposed method and with that of the baseline estimators.
Table~\ref{tab:complexity} shows the complexity of channel estimation methods in big-O notation. 
The proposed method exhibits significantly lower complexity, achieving linear time complexity of $\mathcal{O}(M)$.
Among the baselines, the BEACHES algorithm requires an iterative search to find the optimal soft-thresholding parameter that minimizes the SURE\cite{ref:beaches}.
Each trial of the search requires the complexity of $\mathcal{O}(M)$ to calculate SURE for an arbitrary thresholding parameter.
Although the complexity is reduced to $\mathcal{O}(M\log M)$ by sorting, the iterative search of thresholding parameter still remains.
LAMP-GM is based on AMP, hence it requires iterative updates for parameters and each update entails $\mathcal{O}(M^2)$.
Also, LAMP replaces the iterative updates of AMP with neural network layers\cite{ref:LAMP-GM}.
As a result, its complexity additionally increases linearly with the number of layers $T$, i.e., $\mathcal{O}(TM^2)$, and it still demands training with labeled data.
LDGEC-SURE further increases the computational burden, requiring two matrix inversions per layer over $T$ layers, with each inversion having a complexity of $\mathcal{O}(M^3)$, and includes multiple convolutional operations for DnCNN\cite{ref:ldgec}.

These distinctions emphasize the structural simplicity of the proposed method.
Unlike the baselines, it does not require complicated operations such as matrix inversion, parameter updates, iterative parameter search, and any form of deep learning.
All necessary parameters are obtained either in a closed-form or within $\mathcal{O}(M)$ complexity.
Moreover, as the proposed method is not based on deep learning, it does not require any training stage or labeled data, making it highly suitable for real-time deployment or computationally-constrained circumstances.

\subsection{Performance Evaluation}

\begin{figure*}[]
    \centering
    \subfloat[]{\includegraphics[width=0.95\columnwidth]{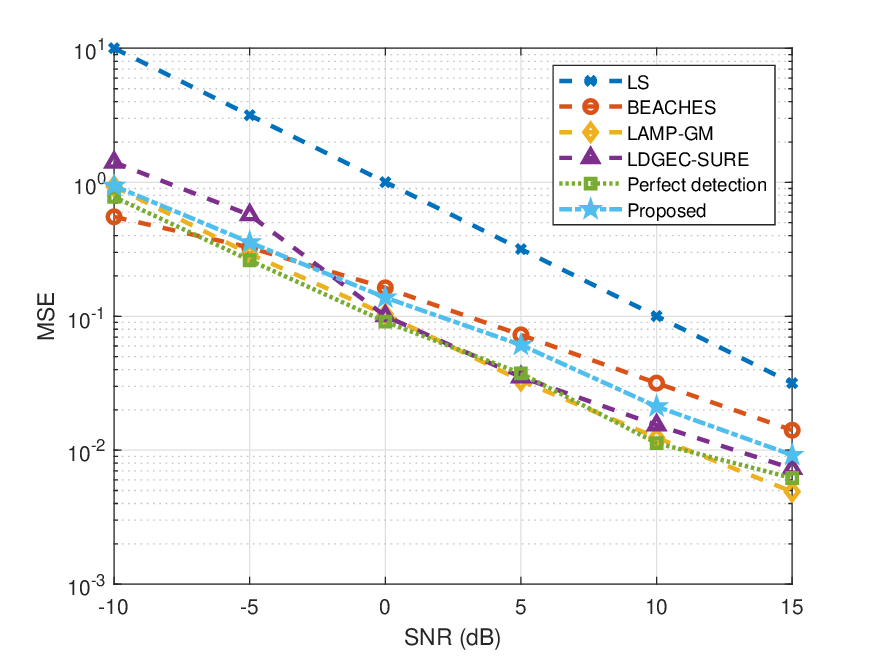}}
    \subfloat[]{\includegraphics[width=0.95\columnwidth]{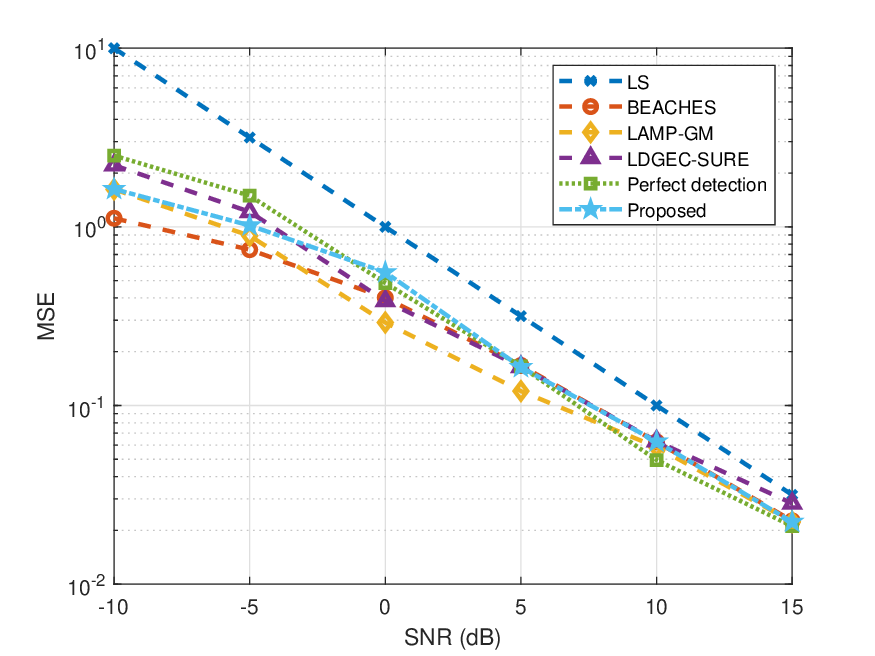}}
    \caption{MSE depending on SNR compared to baselines. (a) LoS (b) NLoS.}
    \label{fig:MSE_baseline}
\end{figure*}

Fig.~\ref{fig:activity} presents the performance of activity rate estimation compared to the ground-truth value for synthetic and realistic channel data.
For the synthetic data, the estimator achieves high accuracy when $\mathrm{SNR}\geq5$ dB, while it tends to slightly overestimate $q$ under low-SNR conditions such as $\mathrm{SNR}=-5$ dB.
As discussed in Theorem~\ref{thm:bound_act}, the estimator shows relatively higher accuracy for a smaller ground-truth $q$.
Meanwhile, for realistic channel data, the estimation accuracy degrades compared to the synthetic case, as the realistic beamspace channels do not strictly follow the Bernoulli-complex Gaussian distribution.
The result shows lower estimation accuracy with larger variance in lower SNR conditions. Also, the LoS case exhibits a smaller estimation variance compared to that of the NLoS case.

\begin{figure*}[ht]
    \centering
    \subfloat[]{\includegraphics[width=0.95\columnwidth]{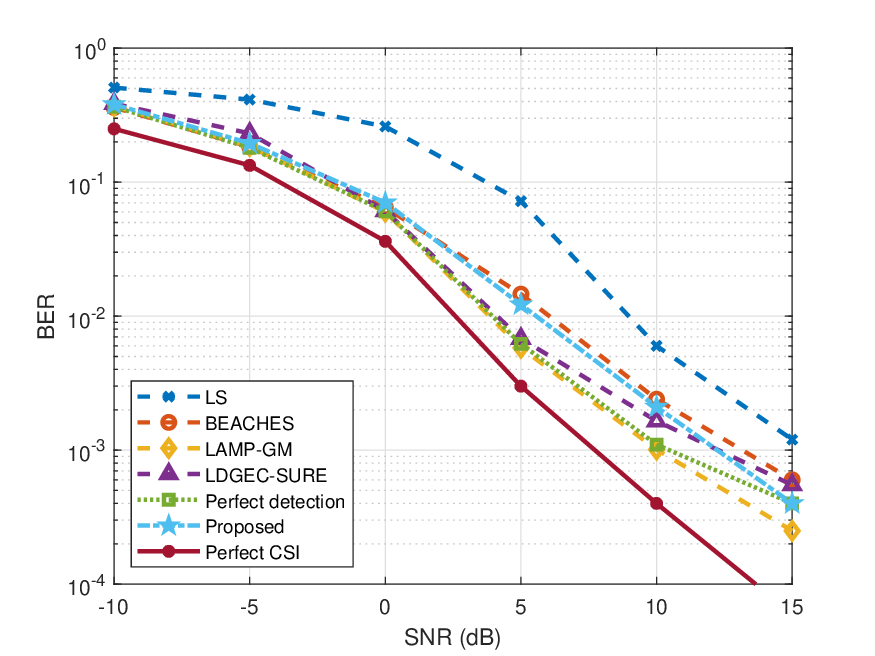}}
    \subfloat[]{\includegraphics[width=0.95\columnwidth]{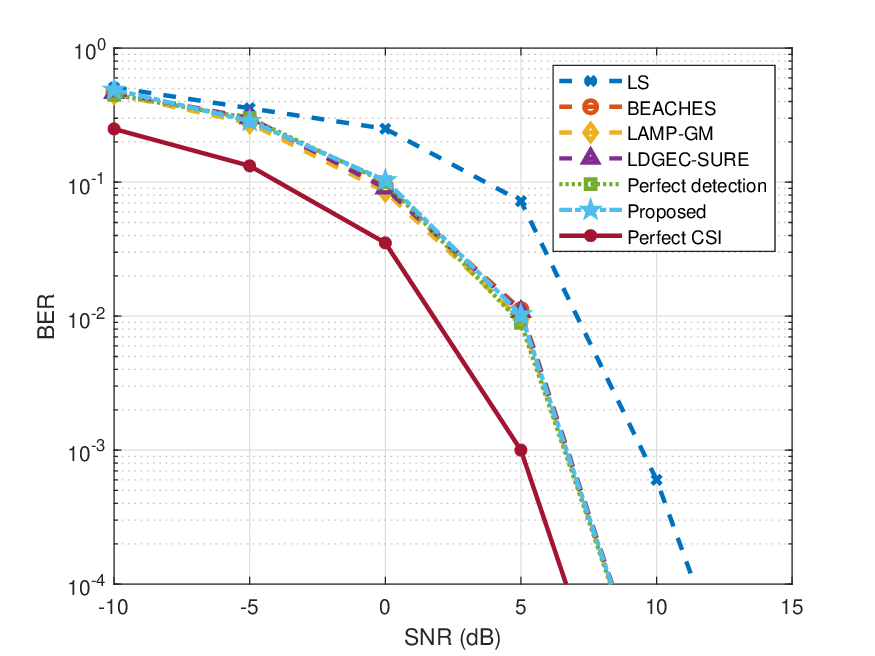}}
    \caption{Post-equalization BER depending on SNR compared to baselines. (a) LoS (b) NLoS.}
    \label{fig:BER}
\end{figure*}

\begin{figure*}
    \centering
    \subfloat[]{\includegraphics[width=0.95\columnwidth]{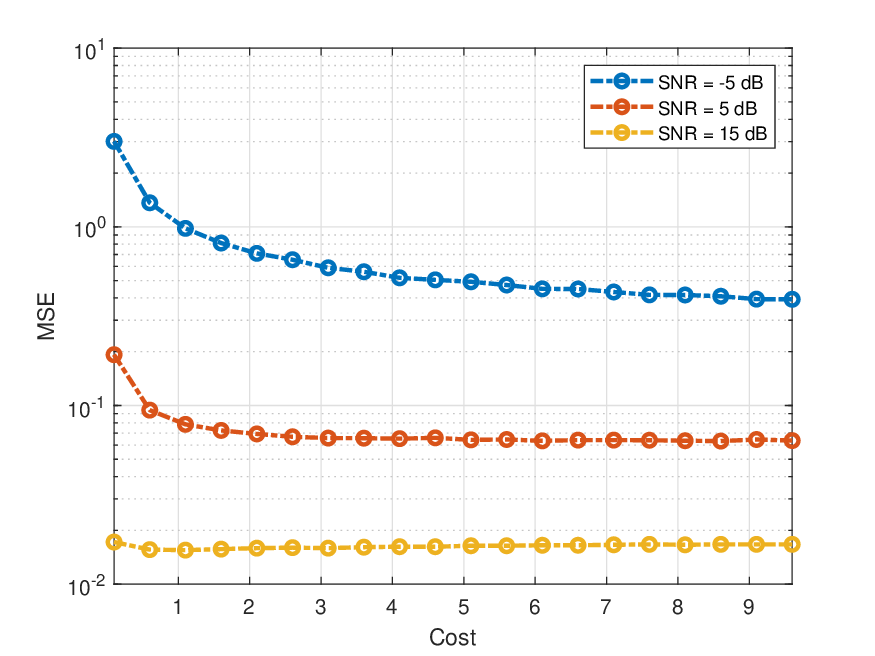}}
    \subfloat[]{\includegraphics[width=0.95\columnwidth]{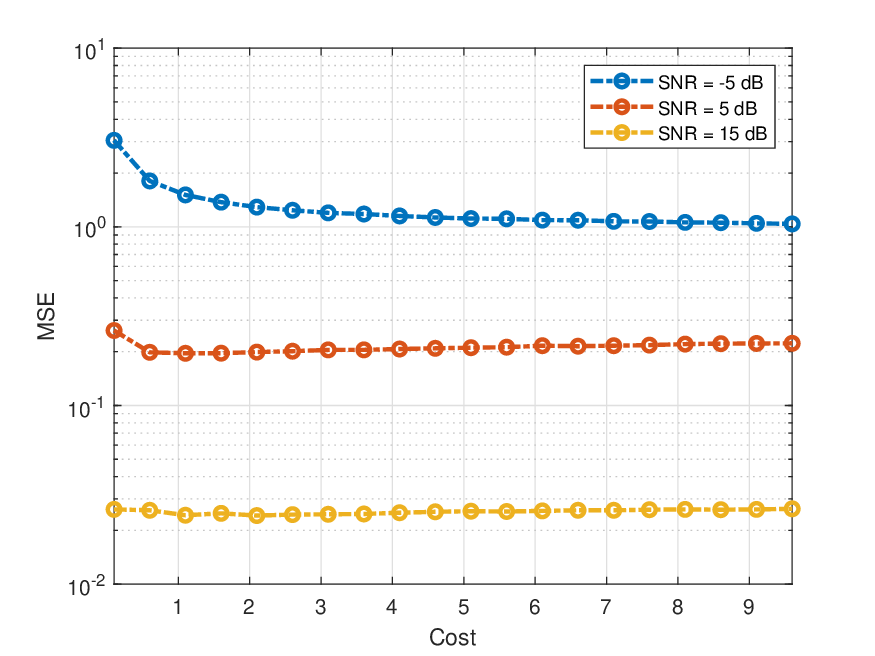}}
    \caption{MSE depending on cost $C$. (a) LoS (b) NLoS.}
    \label{fig:MSE_cost}
\end{figure*}

\begin{figure*}
    \centering
    \subfloat[]{\includegraphics[width=0.95\columnwidth]{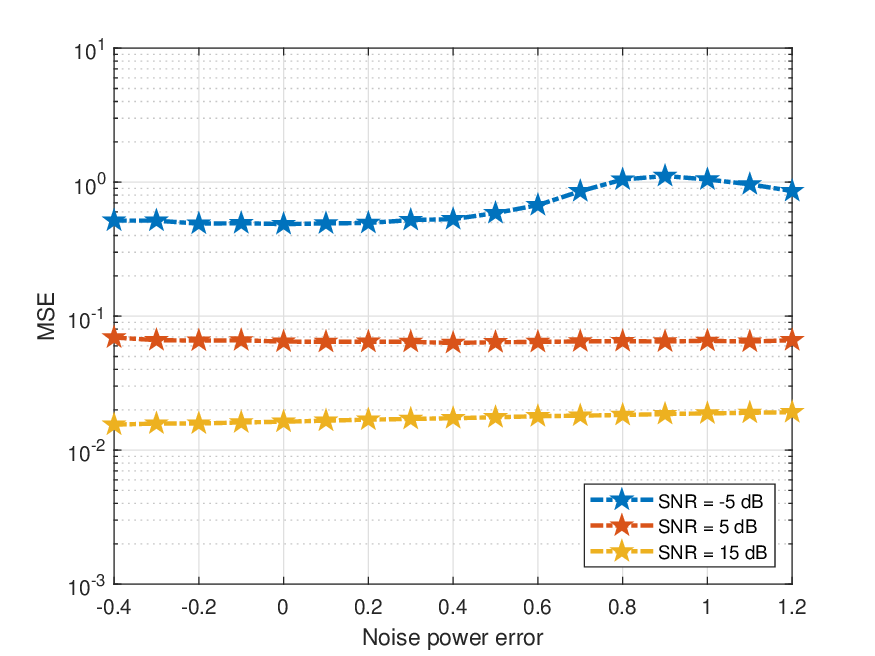}}
    \subfloat[]{\includegraphics[width=0.95\columnwidth]{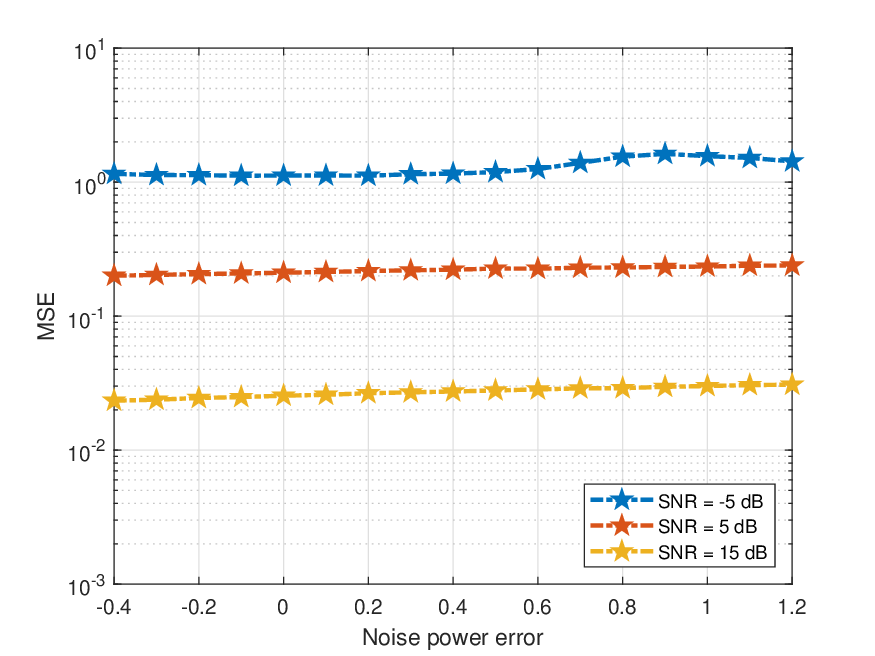}}
    \caption{MSE depending on the noise power estimation error. (a) LoS (b) NLoS.}
    \label{fig:MSE_noiseerr}
\end{figure*}

Fig.~\ref{fig:MSE_baseline} illustrates the MSE of channel estimation compared to other baselines depending on SNR.
Compared to ``Perfect detection'', the proposed algorithm achieves comparable estimation accuracy in both LoS and NLoS scenarios. 
It means that the detection is performed well overall, and the influence of detection errors is not critical in channel estimation performance.
Moreover, in NLoS scenarios, it even outperforms ``Perfect detection'' under low-SNR conditions where SNR is lower than 0 dB.
It is because misdetections may unintendedly suppress more noises by zeroing signals but with strong noises.
The similarity in performance between the proposed method and ``Perfect detection'' implies that the activity rate estimation error is negligible and does not significantly degrade the overall estimation performance.
Compared to other baselines, the proposed algorithm shows similar MSE.
In particular, in LoS scenarios, the proposed algorithm shows higher estimation accuracy than BEACHES when SNR is higher than 0 dB. 
Accordingly, the proposed algorithm demonstrates channel estimation performance comparable to other baselines, despite its extremely low computational complexity.
When comparing the LoS and NLoS scenarios, the proposed algorithm exhibits greater advantage in LoS conditions.
This is because the proposed algorithm shows higher estimation accuracy if the channel has more sparsity, as discussed in Corollaries~\ref{cor:inf} and \ref{cor:prop}.

Fig.~\ref{fig:BER} presents the post-equalization uncoded BER depending on SNR.
It reveals that the proposed channel estimation shows comparable performance not only in MSE but also in channel estimation-based application to those of other baselines despite the low-complexity, which shows its efficacy.
In both LoS and NLoS scenarios, the difference in BER between the baseline algorithms and the proposed method is almost negligible. Notably, the performance difference is further diminished under NLoS conditions.
This implies that the computation-intensive and deep learning-based baselines do not offer significant performance advantages over the proposed method.

Fig.~\ref{fig:MSE_cost} illustrates the variation of MSE depending on the cost $C$. 
In both LoS and NLoS scenarios, the MSE tends to be higher when $C<1$, except in the NLoS case at the SNR of 15 dB, where the MSE remains nearly constant.
This is because $C<1$ implies $C_{10}<C_{01}$, indicating that the testing is stricter against misdetection but more tolerant toward false alarms, which results in increased residual noise and degraded denoising performance for sparse signals.
In LoS scenarios, increasing $C$ generally leads to a reduction in MSE or preserves it at a similar level, with the most notable improvements occurring when $C\leq3$.
In contrast, under NLoS conditions, increasing $C$ results in a noticeable MSE reduction only in the low-SNR case ($\text{SNR}=-5$ dB), while in higher SNR settings ($\text{SNR}\geq5$ dB), a slight increase in MSE is observed.
Nevertheless, this increase remains negligible.
Overall, selecting $C$ around 5 does not significantly affect estimation performance across various conditions.

Fig.~\ref{fig:MSE_noiseerr} illustrates the MSE depending on the noise power estimation error.
Note that noise power estimation error also propagates to the SNR estimate in \eqref{eq:snrest}, and the activity rate estimate in \eqref{eq:activity_rate_est}.
In both cases, underestimating the noise power---i.e., when the noise power error is negative---does not critically affect the MSE.
Slight overestimation also has a negligible impact on the estimation performance across all SNR regimes.
Even when the noise power is overestimated by nearly a factor of two, the MSE remains stable at 5 dB and 15 dB SNR levels.
In contrast, at low SNRs (e.g., -5 dB), such severe overestimation results in a noticeable increase in the MSE.
Nevertheless, such cases rarely occur in practice, as such severe noise power overestimation is uncommon in low-SNR conditions\cite{ref:MAD}.

Fig.~\ref{fig:ROC} illustrates the receiver operating characteristic (ROC) curve of signal element detection. 
The theoretical predictions and measured results show similar values, which implies that the Bernoulli-complex Gaussian model approximates the beamspace sparse channel well.
As the SNR increases, the ROC curve shifts toward the upper-left corner, indicating a higher probability of detection at a given false alarm probability, which means the significant improvement of detection performance with increasing SNR.
This trend is consistent among the theoretical predictions and measured results, and also in both LoS and NLoS scenarios.
This trend is intutive: higher noise power makes detection more difficult, while lower noise power implies ease of detection due to larger differences between the magnitudes of noise-relevant and signal-dominant elements.
In LoS scenarios, the measured results exhibit lower estimation performance compared to the theoretical ROC curve.
On the other hand, measured result shows similar performance to the theoretical curve in NLoS scenarios.
It is because LoS beamspace channel has a LoS path with dominant gain which has massive magnitude compared to other paths, which makes the approximation to Bernoulli-complex Gaussian prior less efficient.
Comparing the curves, the ROC curve for the LoS case lies further toward the upper-left corner than that of the NLoS case, which reflects better detection capability.
It is because LoS channel has more sparsity than NLoS, and stronger sparsity, i.e., smaller $q$, leads to improved detection probability, which is described in \eqref{eq:P_D}.

\section{Conclusion}\label{sec:conclusion}
\begin{figure*}
    \centering
    \subfloat[]{\includegraphics[width=0.95\columnwidth]{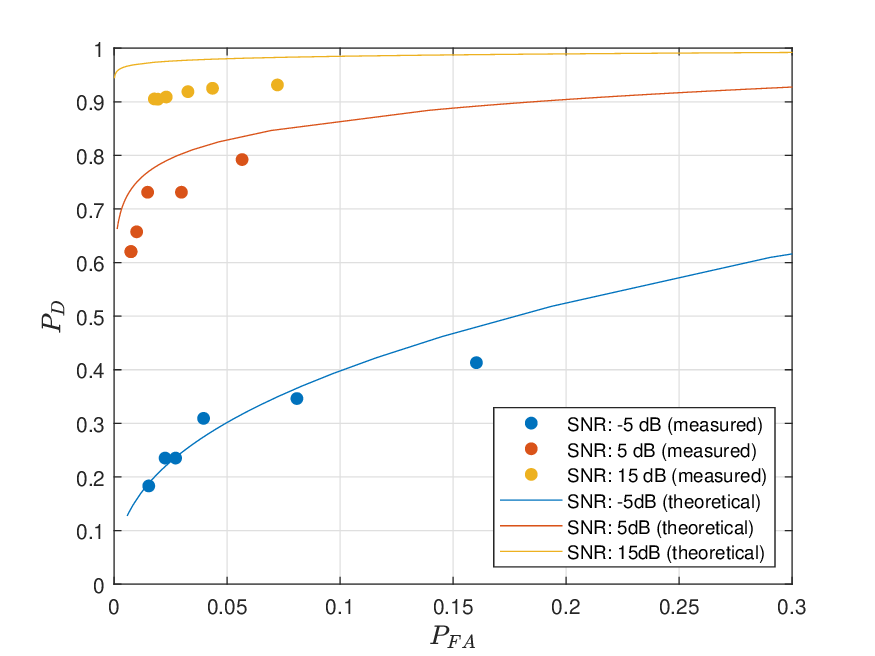}}
    \subfloat[]{\includegraphics[width=0.95\columnwidth]{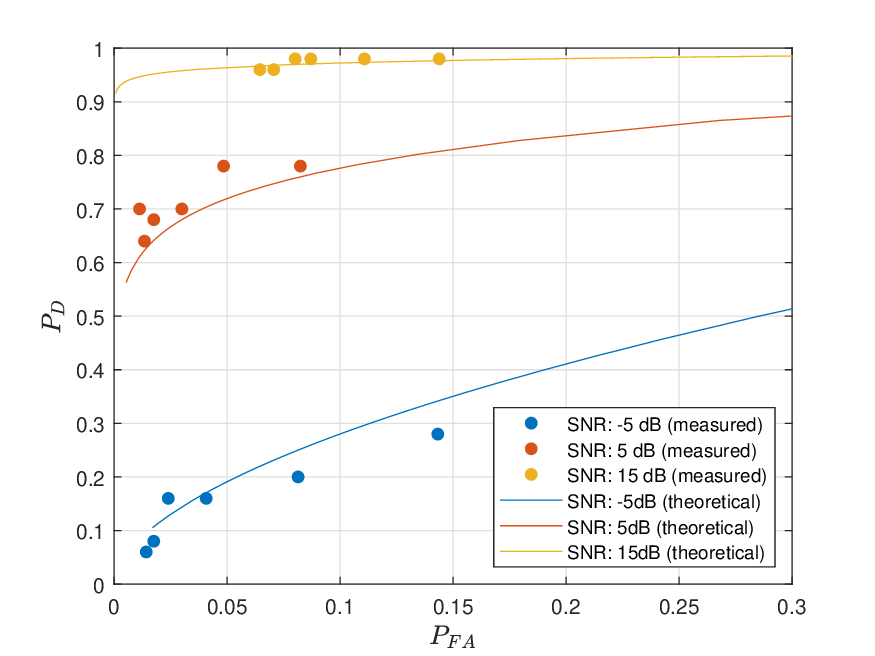}}
    \caption{ROC Curves depending on SNR. (a) LoS (b) NLoS.}
    \label{fig:ROC}
\end{figure*}

In this paper, we proposed a beamspace channel estimation algorithm for mmWave massive MIMO systems, based on Bayesian binary hypothesis testing. 
The algorithm detects signal-relevant elements via a likelihood ratio test under a Bernoulli-complex Gaussian distribution and applies hard-thresholding to suppress noise. 
Its element-wise structure enables linear-time complexity while maintaining high estimation accuracy. 
Simulation results under both LoS and NLoS scenarios show that the proposed method achieves performance comparable to that of existing low-complexity or learning-based baselines, whlie achieving significantly lower computational complexity. 
In addition, we derived analytical expressions for the detection and false alarm probabilities, which effectively explain the estimation behavior under various conditions. 
These findings demonstrate that the proposed method may be a practical and efficient solution for real-time mmWave channel estimation.

\appendices
\section{Derivation of Hypothesis Testing Threshold}\label{app:derivation}
In this appendix, we present the full derivation of the decision threshold in \eqref{eq:test}, which is derived by applying the Bayesian likelihood ratio test, assuming that each beamspace element follows a Bernoulli-complex Gaussian mixture distribution. 
First, the left-hand side of \eqref{eq:test} can be rewritten as

\begin{equation}
\begin{aligned}
    &\frac{
        \dfrac{1}{\pi \left( \|\bar{\mathbf{h}}\|^2 / p(\mathcal{H}_1) + E_0 \right)}
        \exp\left( -\dfrac{|\bar{h}_m|^2}{\|\bar{\mathbf{h}}\|^2 / p(\mathcal{H}_1) + E_0} \right)
    }{
        \dfrac{1}{\pi E_0}
        \exp\left( -\dfrac{|\bar{h}_m|^2}{E_0} \right)
    } \\
    &= \frac{E_0}{\|\bar{\mathbf{h}}\|^2 / p(\mathcal{H}_1) + E_0} 
    \exp\left( 
        -\frac{|\bar{h}_m|^2}{\|\bar{\mathbf{h}}\|^2 / p(\mathcal{H}_1) + E_0}
        + \frac{|\bar{h}_m|^2}{E_0}
    \right).
\end{aligned}
\end{equation}
Subsequently, the test in \eqref{eq:test} can be reformulated as
\begin{equation}
\begin{aligned}
    &\exp\left( 
        -\frac{|\bar{h}_m^\prime|^2}{\|\bar{\mathbf{h}}\|^2 / p(\mathcal{H}_1) + E_0}
        + \frac{|\bar{h}_m^\prime|^2}{E_0}
    \right) \\
    &\quad \underset{\mathcal{H}_0}{\overset{\mathcal{H}_1}{\gtrless}} 
    \left( \frac{\|\bar{\mathbf{h}}\|^2 / p(\mathcal{H}_1) + E_0}{E_0} \right)
    \left( \frac{1 - p(\mathcal{H}_1)}{p(\mathcal{H}_1)} C \right).
\end{aligned}
\end{equation}
Applying the natural logarithm to both sides of the equation, we obtain

\begin{equation}
\begin{aligned}
    &-\frac{|\bar{h}_m^\prime|^2}{\|\bar{\mathbf{h}}\|^2 / p(\mathcal{H}_1) + E_0}
    + \frac{|\bar{h}_m^\prime|^2}{E_0} \\
    &\quad = \ |\bar{h}_m^\prime|^2\left( -\frac{1}{\|\bar{\mathbf{h}}\|^2 / p(\mathcal{H}_1) + E_0} + \frac{1}{E_0} \right)\\
    &\quad = \ |\bar{h}_m^\prime|^2 \left( \frac{\|\bar{\mathbf{h}}\|^2 / p(\mathcal{H}_1)}{(\|\bar{\mathbf{h}}\|^2 / p(\mathcal{H}_1) + E_0)E_0} \right) \\
    &\quad \underset{\mathcal{H}_0}{\overset{\mathcal{H}_1}{\gtrless}} \ 
    \ln\left( \frac{\|\bar{\mathbf{h}}\|^2 / p(\mathcal{H}_1) + E_0}{E_0} \times \frac{1 - p(\mathcal{H}_1)}{p(\mathcal{H}_1)} C \right).
\end{aligned}
\end{equation}
Rearranging the expression in terms of $|\bar{h}_m|^2$, we get
\begin{equation}
    \begin{aligned}
        |\bar{h}_m^\prime|^2 
        &\underset{\mathcal{H}_0}{\overset{\mathcal{H}_1}{\gtrless}}
        \left( \frac{(\|\bar{\mathbf{h}}\|^2 / p(\mathcal{H}_1) + E_0)E_0}{\|\bar{\mathbf{h}}\|^2 / p(\mathcal{H}_1)} \right) \\ 
        &\quad \times\ \ln\left( \frac{\|\bar{\mathbf{h}}\|^2 / p(\mathcal{H}_1) + E_0}{E_0} \times \frac{1 - p(\mathcal{H}_1)}{p(\mathcal{H}_1)} C \right).
        \\
        &\ =\ E_0\left( \frac{\|\bar{\mathbf{h}}\|^2/p(\mathcal{H}_1)}{\|\bar{\mathbf{h}}\|^2/p(\mathcal{H}_1)} + \frac{E_0}{\|\bar{\mathbf{h}}\|^2/p(\mathcal{H}_1)} \right) \\
        &\quad\times\ \ln\left( \left( \frac{\|\bar{\mathbf{h}}\|^2}{E_0p(\mathcal{H}_1)} + \frac{E_0}{E_0} \right) \times \frac{1 - p(\mathcal{H}_1)}{p(\mathcal{H}_1)} C \right) \\
        &\ =\ E_0\left( 1 + \frac{E_0}{\|\bar{\mathbf{h}}\|^2}p(\mathcal{H}_1) \right) \\
        &\quad\times\ \ln\left(\left( \frac{1}{E_0} \frac{\|\bar{\mathbf{h}}\|^2}{p(\mathcal{H}_1)} +1 \right) \frac{1-p(\mathcal{H}_1)}{p(\mathcal{H}_1)}C \right)
    \end{aligned}
\end{equation}
Subsequently, by utilizing $\mathrm{SNR}=\|\bar{\mathbf{h}}\|^2/E_0$, the right-hand side can be rewritten as

\begin{equation}
    \begin{aligned}
        &E_0\left( 1 + \frac{E_0}{\|\bar{\mathbf{h}}\|^2}p(\mathcal{H}_1) \right)
        \ln\left(\left( \frac{1}{E_0} \frac{\|\bar{\mathbf{h}}\|^2}{p(\mathcal{H}_1)} +1 \right) \frac{1-p(\mathcal{H}_1)}{p(\mathcal{H}_1)}C \right) \\
        & =\ E_0\left( \frac{p(\mathcal{H}_1)}{\mathrm{SNR}} + 1 \right)
        \ln \left( \left( 1+\frac{\mathrm{SNR}}{p(\mathcal{H}_1)} \right) \frac{1-p(\mathcal{H}_1)}{p(\mathcal{H}_1)}C \right),
    \end{aligned}
\end{equation}
which concludes the derivation.


\section{Derivation of Theoretical $P_{FA}$ and $P_D$}\label{app:pfapd}
We now derive the theoretical false alarm probability $P_{FA}$ and detection probability $P_{D}$ of the element-wise binary hypothesis test of \eqref{eq:test}.
We assume that the beamspace channel component $\bar{h}^\prime_m$ follows a circularly symmetric complex Gaussian distribution under both hypotheses.
Let the test statistic be $T=|\bar{h}^\prime_m|^2$.
As is well known, $|x|^2\sim\mathrm{Exp}(1/\sigma^2)$ for $x\sim\mathcal{CN}(0,\sigma^2)$.
Then, according to each hypothesis, $T\sim \mathrm{Exp}(1/E_0)$ under $\mathcal{H}_0$, and $T\sim\mathrm{Exp}(1/(E_0+\|\mathbf{h}\|^2/q))$ under $\mathcal{H}_1$.
The false alarm probability is defined as the probability that the test statistic exceeds the threshold $\tau$ under $\mathcal{H}_0$:
\begin{equation}
    P_{FA} = p(T>\tau | \mathcal{H}_0).
\end{equation}
Since $T\sim\mathrm{Exp}(1/E_0)$, the PDF is
\begin{equation}
    f_T(t|\mathcal{H}_0) = \frac{1}{E_0}\exp\left( -\frac{t}{E_0} \right),
\end{equation}
for $t\geq0$.
Thus, the probability is
\begin{equation}
    P_{FA} =\int_{\tau}^\infty \frac{1}{E_0}\exp\left( -\frac{t}{E_0} \right) dt = \exp\left( -\frac{\tau}{E_0} \right).
\end{equation}
In addition, the detection probability is the probability that the test statistic exceeds the threshold under $\mathcal{H}_1$:
\begin{equation}
    P_{D}=p(T>\tau|\mathcal{H}_1).
\end{equation}
Here, $T\sim\mathrm{Exp}(1/(E_0+\|\mathbf{h}\|^2/q))$, so the PDF is
\begin{equation}
    f_T(t|\mathcal{H}_1) = \frac{1}{E_0+\|\mathbf{h}\|^2/q}\exp\left( -\frac{t}{E_0+\|\mathbf{h}\|^2/q} \right),
\end{equation}
for $t\geq 0$.
Then, we obtain
\begin{equation}
    \begin{aligned}
        P_D& 
        =\int_\tau^\infty \frac{1}{E_0+\|\mathbf{h}\|^2/q}\exp\left( -\frac{t}{E_0+\|\mathbf{h}\|^2/q} \right)dt \\
        &=\exp\left(-\frac{\tau}{E_0+\|\mathbf{h}\|^2/q}\right).
    \end{aligned}
\end{equation}



\section{Proofs}
\subsection{Proof of Theorem \ref{thm:bound}}\label{app:bound}
By the definition of MSE, it can be expressed as
\begin{equation}
    \mathrm{MSE}=\frac{1}{M} \sum_{m=1}^M \mathbb{E}[|\bar{h}_m^\star-\bar{h}_m|^2].
\end{equation}
This can be separated for each hypothesis by the law of total expectation as
\begin{equation}
    \mathrm{MSE}=q\mathbb{E}[|\bar{h}_m^\star-\bar{h}_m|^2|\mathcal{H}_1]+(1-q)\mathbb{E}[|\bar{h}_m^\star-\bar{h}_m|^2|\mathcal{H}_0].
\end{equation}
First, considering $\mathcal{H}_1$, the element of estimated channel vector is
\begin{equation}
    \bar{h}_m^\star = \begin{cases}
        \bar{h}_m+\bar{e}_m & \textnormal{with probability }P_D, \\
        0 & \textnormal{with probability }1-P_{D}.
    \end{cases}
\end{equation}
Thus, the error can be calculated as
\begin{equation}
\begin{aligned}
    \mathbb{E}[|\bar{h}_m^\star-\bar{h}_m|^2|\mathcal{H}_1] & = P_D\mathbb{E}[|\bar{e}_m|^2] + (1-P_D)\mathbb{E}[|\bar{h}_m|^2] \\
    & \approx P_DE_0+(1-P_D)(\|\bar{\mathbf{h}}\|^2).
\end{aligned}
\end{equation}
In the case of $\mathcal{H}_0$, the element of estimated channel vector is
\begin{equation}
    \bar{h}_m^\star = \begin{cases}
        0 & \textnormal{with probability }1-P_{FA}, \\
        \bar{e}_m & \textnormal{with probability }P_{FA}. \\
    \end{cases}
\end{equation}
Then, the error is expressed as
\begin{equation}
    \begin{aligned}
        \mathbb{E}[|\bar{h}_m^\star-\bar{h}_m|^2|\mathcal{H}_0] & = (1-P_{FA})\cdot 0 + P_{FA} \mathbb{E}[|\bar{h}_m|^2]\\ 
        & \approx P_{FA}E_0.
    \end{aligned}
\end{equation}
Thus, the expected MSE can be approximated as
\begin{equation}
    \mathrm{MSE}\approx
    q[P_DE_0+(1-P_D)(\|\bar{\mathbf{h}}\|^2)]
    +(1-q)P_{FA}E_0.
\end{equation}

\subsection{Proof of Corollary \ref{cor:inf}}\label{app:proof_cor}
From Theorem~\ref{thm:bound}, the approximation of the expected MSE is given by:
\begin{equation}
    \mathrm{MSE} \approx q[P_DE_0+(1-P_D)\|\bar{\mathbf{h}}\|^2]+(1-q)P_{FA}E_0.
\end{equation}
Consider the asymptotic case where the detection probability $P_D$ approaches 1 and the false alarm probability $P_{FA}$ approaches 0.
In this limit, the term $(1-P_D)\|\bar{\mathbf{h}}\|$ vanishes, and the false alarm contribution $(1-q)P_{FA}E_0$ also converges to zero.
Therefore, the bound simplifies to 
\begin{equation}
    \mathrm{MSE}\to qE_0.
\end{equation}

\subsection{Proof of Corollary~\ref{cor:prop}\label{app:proof_cor2}}
From Theorem~\ref{thm:bound}, the approximation of the expected MSE is given by:
\begin{equation}
    \mathrm{MSE} \approx q[P_DE_0+(1-P_D)\|\bar{\mathbf{h}}\|^2]+(1-q)P_{FA}E_0.
\end{equation}
Since we consider $P_D \gg P_{FA}$, the bound can be approximated as
\begin{equation}
\begin{aligned}
     & q[P_DE_0+(1-P_D)\|\bar{\mathbf{h}}\|^2]+(1-q)P_{FA}E_0 \\
    & \approx q[P_DE_0+(1-P_D)\|\bar{\mathbf{h}}\|^2].
\end{aligned}
\end{equation}
Furthermore, we assumed $P_D \gg 1-P_D$, hence it can also be approximated as
\begin{equation}
    \begin{aligned}
        q[P_DE_0+(1-P_D)\|\bar{\mathbf{h}}\|^2]
        & = qP_DE_0+q(1-P_D)\|\bar{\mathbf{h}}\|^2 \\
        & \approx qP_DE_0 \\
        & = \beta q.
    \end{aligned}
\end{equation}

\subsection{Proof of Theorem~\ref{thm:bound_act}\label{app:proof_thm2}}
Since the estimated activity rate $\hat{q}$ is quantized into discrete set $\left\{ \frac{m}{M} | m=1,...,M \right\}$, it can be written as
\begin{equation}
    \hat{q}= \hat{q}_u + \tilde{q},
\end{equation}
with $|\tilde{q}|\leq\frac{1}{2M}$,
where $\hat{q}_u$ is the unquantized estimate given by
\begin{equation}
    \hat{q}_u = \frac{2(\widehat{\mathrm{SNR}})^2}{\dfrac{\hat{\mu}_4}{\widehat{E}_0^2}-2-4\widehat{\mathrm{SNR}}}.
\end{equation}
By decomposition, we get
\begin{equation}
    \textnormal{Var}(\hat{q}) =\textnormal{Var}\left(\hat{q}_u + \tilde{q} \right) = \textnormal{Var}(\hat{q}_u ) + \textnormal{Var}(\tilde{q}) + 2\mathrm{Cov}(\hat{q}_u,\tilde{q}).
\end{equation}
First, we bound the variance of $\hat{q}_u$.
Using delta methods, we approximate $\hat{q}_u=f(\hat{\mu}_4)$, where
\begin{equation}
    f(\mu)=\frac{2\|\bar{\mathbf{h}}\|^4}{\mu - 2E_0^2 - 4E_0\|\bar{\mathbf{h}}\|^2}.
\end{equation}
To check the availability of delta method, we first check the consistency.
By the strong law of large numbers,
\begin{equation}
    \hat{\mu}_4=\frac{1}{M}\sum_{m=1}^M |\bar{h}_m^\prime|^4 \overset{\textnormal{a.s.}}{\longrightarrow} \mu_4.
\end{equation}
Accordingly, it is the consistent estimator.
Also, since $\{|\bar{h}_m^\prime|^4\}$ is i.i.d. and has finite variance because it is mixture of Gaussians, it satisfies
\begin{equation}
    \sqrt{M}(\hat{\mu}_4-\mu_4) \overset{D}{\longrightarrow} \mathcal{N}(0,\sigma^2),
\end{equation}
where $\sigma^2>0$ is an arbitrary constant. Thus, $\textnormal{Var}(\hat{\mu}_4)=\sigma^2/M$.
Also, $f(\mu)$ is differentiable in a neighborhood of $\mu_4$.
Hence, delta method is applicable, and provides a valid first-order approximation of the variance of $\hat{q}_u$ as follows:
\begin{equation}
    \begin{aligned}
    f^\prime(\mu_4) & = -\frac{2\|\bar{\mathbf{h}}\|^4}{(\mu_4 - 2E_0^2 - 4E_0\|\bar{\mathbf{h}}\|^2)^2} \\
    & = -\frac{2\|\bar{\mathbf{h}}\|^4}{(2\|\bar{\mathbf{h}}\|^4/q)^2} \\
    & = -\frac{q^2}{2\|\bar{\mathbf{h}}\|^4},
    \end{aligned}
\end{equation}
because $\mu_4 = 2E_0^2 + 4E_0 + 2\|\bar{\mathbf{h}}\|^4/q$.
Therefore, the variance of $\hat{q}_u$ can be simplified as
\begin{equation}
    \textnormal{Var}(\hat{q}_u) = (f^\prime(\mu_4))^2\cdot \textnormal{Var}(\hat{\mu}_4) = \frac{cq^4}{M},
\end{equation}
where $c=\sigma^2/4\|\bar{\mathbf{h}}\|^8$.
Since $q\in\{\frac{m}{M}|m=1,...,M\}$ by definition, it follows that $q^4\leq q^2$.
Also, since $q\geq\frac{1}{M}$,
\begin{equation}
    q^4 = q^2\cdot q^2 \geq q^2 \cdot \left(\frac{1}{M}\right)^2 = \frac{q^2}{M^2}.
\end{equation}
Hence, there exist constants $c_1^\prime, c_2^\prime>0$ such that:
\begin{equation}
    \frac{c_1^\prime q^2}{M} \leq \textnormal{Var}(\hat{q}_u) \leq \frac{c_2^\prime q^2}{M}.
\end{equation}
Meanwhile, utilizing the inequality, the variation of $|\tilde{q}|$ is bounded as
\begin{equation}
    \textnormal{Var}(\tilde{q})\leq \left( \frac{1}{2M} \right)^2 = \frac{1}{4M^2}.
\end{equation}
Also, using the Cauchy-Schwarz inequality, the covariance between $\hat{q}$ and $\tilde{q}$ is bounded by
\begin{equation}
    |\mathrm{Cov}(\hat{q},\tilde{q})| \leq \sqrt{\textnormal{Var}(\hat{q})\cdot\textnormal{Var}(\tilde{q})}
    \leq \sqrt{\frac{c_2^\prime q^2}{M}\cdot \frac{1}{4M^2}}=\frac{c_3 q}{M^{3/2}}.
\end{equation}
Consequently, we now combine the terms for upper bound as
\begin{equation}
\begin{aligned}
    \textnormal{Var}(\hat{q}) & = \textnormal{Var}(\hat{q}_u ) + \textnormal{Var}(\tilde{q}) + 2\mathrm{Cov}(\hat{q}_u,\tilde{q}) \\
    & \leq \frac{c_2^\prime q^2}{M} + \frac{1}{4M^2} + \frac{2c_3 q}{M^{3/2}} 
    \\ & 
    \leq \frac{c_2 q^2}{M},
\end{aligned}
\end{equation}
since we assumed that $M$ is sufficiently large.
Likewise, the lower bound can be derived as
\begin{equation}
    \textnormal{Var}(\hat{q}) \geq \frac{c_1 q^2}{M} - \frac{1}{4M^2} - \frac{2c_3 q}{M^{3/2}}
    \geq \frac{c_1 q^2}{M}.
\end{equation}
Also, this lower/upper bounds imply that 
\begin{equation}
    \textnormal{Var}(\hat{q})\propto q^2.
\end{equation}

\end{document}